\documentclass[a4paper,UKenglish]{lipics-v2018}


\hideLIPIcs
\usepackage{microtype}

\usepackage{amsthm}

\bibliographystyle{plainurl}

\newcommand{\augment}{\textsc{Augment}}
\newcommand{\ignore}[1]{}
\newcommand{\K}{\mathbb{K}}
\newcommand{\BigO}{\mathcal{O}}
\newcommand{\eps}{\varepsilon}
\newcommand{\reals}{\mathbb{R}}
\newcommand{\BigOT}{\tilde{\mathcal{O}}}
\newcommand{\opt}{\textsc{Opt}}
\newcommand{\runningtime}{\BigOT( m(\sqrt{w}+ \sqrt{r}+\frac{wr}{n}))}
\newcommand{\dist}{{\sf c}}

\newcommand{\grid}{\mathbb{G}}
\newcommand{\ymax}{y_{max}}
\newcommand{\etal}{\textit{et al}.}

\newcommand{\R}{\mathcal{R}}

\newtheorem*{lemma*}{Lemma}
\newtheorem*{corollary*}{Corollary}

\nolinenumbers
\title{A Weighted Approach to the Maximum Cardinality Bipartite Matching Problem with Applications in Geometric Settings}

\titlerunning{Weighted Approach to Maximum Cardinality Bipartite Matching}

\author{Nathaniel Lahn}{Department of Computer Science, Virginia Tech, USA}{lahnn@vt.edu}{}{}

\author{Sharath Raghvendra}{Department of Computer Science, Virginia Tech, USA}{sharathr@vt.edu}{}{}

\authorrunning{N. Lahn and S. Raghvendra}

\subjclass{ Theory of computation $\rightarrow$  Design and analysis of algorithms $\rightarrow$ Graph algorithms analysis $\rightarrow$  Network flows}

\keywords{Bipartite Matching, Bottleneck Matching}
\Copyright{Nathaniel Lahn and Sharath Raghvendra}
\begin{document}

\maketitle
\begin{abstract}

We present a weighted approach to compute a maximum cardinality matching in an arbitrary bipartite graph.
Our main result is a new algorithm that takes as input a weighted bipartite graph $G(A\cup B,E)$ with edge weights of $0$ or $1$. Let $w \leq n$ be an upper bound on the weight of any matching in $G$. Consider the subgraph induced by all the edges of $G$ with a weight $0$. Suppose every connected component in this subgraph has $\BigO(r)$ vertices and $\BigO(mr/n)$ edges. We present an algorithm to compute a maximum cardinality matching in $G$ in $\runningtime$ time.\footnote{We use $\BigOT$ to suppress poly-logarithmic terms.} 

When all the edge weights are $1$ (symmetrically when all weights are $0$), our algorithm will be identical to the well-known Hopcroft-Karp (HK) algorithm, which runs in $\BigO(m\sqrt{n})$ time. However, if  we can carefully assign weights of $0$ and $1$ on its edges such that both $w$ and $r$ are sub-linear in $n$  and $wr=\BigO(n^{\gamma})$ for $\gamma < 3/2$, then we can compute maximum cardinality matching in $G$ in $o(m\sqrt{n})$ time.  Using our algorithm, we obtain a new $\BigOT(n^{4/3}/\eps^4)$ time algorithm to compute an $\eps$-approximate bottleneck matching of $A,B\subset\reals^2$ and an $\frac{1}{\eps^{\BigO(d)}}n^{1+\frac{d-1}{2d-1}}\mathrm{poly}\log n$ time algorithm for computing $\eps$-approximate bottleneck matching in $d$-dimensions. All previous algorithms take $\Omega(n^{3/2})$ time. Given any graph $G(A \cup B,E)$ that has an easily computable balanced vertex separator for every subgraph $G'(V',E')$ of size $|V'|^{\delta}$, for $\delta\in [1/2,1)$, we can apply our algorithm to compute a maximum matching in $\BigOT(mn^{\frac{\delta}{1+\delta}})$ time improving upon the $\BigO(m\sqrt{n})$ time taken by the HK-Algorithm. 
 \end{abstract}

\section{Introduction} 
We consider the classical matching problem in an arbitrary unweighted bipartite graph $G(A \cup B, E)$ with $|A|=|B|=n$ and $E\subseteq A\times B$. A \emph{matching} $M\subseteq E$ is a set of vertex-disjoint edges. We refer to a largest cardinality matching $M$ in $G$ as a \emph{maximum matching}. A maximum matching is \emph{perfect} if $|M|=n$. Now suppose the graph is weighted and every edge $(a,b)\in E$ has a \emph{weight} specified by $\dist(a,b)$. The weight of any subset of edges $E'\subseteq E$ is given by $\sum_{(a,b)\in E}\dist(a,b)$. A \emph{minimum-weight maximum matching} is a maximum matching with the smallest weight. In this paper, we present an algorithm to compute a maximum matching faster by carefully assigning weights of $0$ and $1$ to the edges of $G$.

\subparagraph*{Maximum matching in graphs:} In an arbitrary bipartite graph with $n$ vertices and $m$ edges, Ford and Fulkerson's algorithm~\cite{ford_fulkerson} iteratively computes, in each phase, an augmenting path in $\BigO(m)$ time, leading to a maximum cardinality matching in $\BigO(mn)$ time. Hopcroft and Karp's algorithm (HK-Algorithm)~\cite{hk_sicomp73} reduces the number of phases from $n$ to $\BigO(\sqrt{n})$ by computing a maximal set of vertex-disjoint shortest augmenting paths in each phase. A single phase can be implemented in $\BigO(m)$ time leading to an overall execution time of $\BigO(m\sqrt{n})$.
In weighted bipartite graphs with $n$ vertices and $m$ edges, the well-known Hungarian method computes a minimum-weight maximum matching in  $\BigO(mn)$ time~\cite{hungarian_56}.  Gabow and Tarjan designed a weight-scaling algorithm (GT-Algorithm) to compute a minimum-weight perfect matching in $\BigO(m\sqrt{n}\log (nC))$ time, provided all edge weights are integers bounded by $C$~\cite{gt_sjc89}. Their method, like the Hopcroft-Karp algorithm, computes a maximal set of vertex-disjoint shortest (for an appropriately defined augmenting path cost) augmenting paths in each phase. For the maximum matching problem in arbitrary graphs (not necessarily bipartite), a weighted approach has been applied to achieve a simple $\BigO(m\sqrt{n})$ time algorithm~\cite{gabow2017weighted}. 

Recently Lahn and Raghvendra~\cite{lr_soda19}  gave $\BigOT(n^{6/5})$ and  $\BigOT(n^{7/5})$ time algorithms for finding a minimum-weight perfect bipartite matching in planar and $K_h$-minor\footnote{They assume $h=O(1)$.} free graphs respectively, overcoming the $\Omega(m\sqrt{n})$ barrier; see also Asathulla~\etal~\cite{soda-18}. Both these algorithms are based on the existence of an $r$-clustering which, for a parameter $r >0$, is a partitioning of $G$ into edge-disjoint clusters $\{\R_1,\ldots,\R_k\}$ such that $k = \BigOT(n/\sqrt{r})$, every cluster $\R_j$ has $\BigO(r)$ vertices, and each cluster has $\BigOT(\sqrt{r})$ \emph{boundary} vertices. A  boundary vertex has edges from two or more clusters incident on it. Furthermore, the total number of boundary vertices, counted with multiplicity, is $\BigOT(n/\sqrt{r})$. The algorithm of Lahn and Raghvendra extends to any graph that admits an $r$-clustering. 
There are also algebraic approaches for the design of fast algorithms for bipartite matching; see for instance~\cite{madry2013navigating,fast_matrix_matching}. 

\subparagraph*{Matching in geometric settings:}  In geometric settings, $A$ and $B$ are points in a fixed $d$-dimensional space and $G$ is a complete bipartite graph on $A$ and $B$. For a fixed integer $p\ge 1$, the weight of an edge between $a \in A$ and $b\in B$ is $\|a-b\|^p$, where $\|a-b\|$ denotes the Euclidean distance between $a$ and $b$. The weight of a matching $M$ is given by $\bigl(\sum_{(a,b)\in M} \|a-b\|^p\bigr)^{1/p}$. For any fixed $p \ge 1$, we wish to compute a perfect matching with the minimum weight. When $p=1$, the problem is the well-studied \emph{Euclidean bipartite matching} problem. A minimum-weight perfect matching for $p = \infty$ will minimize the largest-weight edge in the matching and is referred to as a \emph{bottleneck matching}. The Euclidean bipartite matching in a plane can be computed in $\BigOT(n^{3/2+\delta})$~\cite{s_socg13} time for an arbitrary small $\delta>0$; see also Sharathkumar and Agarwal~\cite{sa_soda12}. Efrat~\etal\ present an algorithm to compute a bottleneck matching in the plane in $\BigOT(n^{3/2})$~\cite{efrat_algo} time. Both these algorithms use geometric data structures in a non-trivial fashion to speed up classical graph algorithms.

When $p=1$, for any $0<\eps\le 1$,  there is an $\eps$-approximation algorithm for the Euclidean bipartite matching problem that runs in  $\BigOT(n/\eps^d)$ time~\cite{sa_stoc12}. However, for $p >1$, all known $\eps$-approximation algorithms take $\Omega(n^{3/2}/\eps^d)$ time. We note that it is possible to find a $\Theta(1)$-approximate bottleneck matching in $2$-dimensional space by reducing the problem to finding maximum flow in a planar graph and then finding the flow using an $\BigOT(n)$ time max-flow algorithm~\cite{multiple_planar_maxflow}. There are numerous other results; see also~\cite{argawal_transportation_17,sa_stoc_14,argawal_phillips_rms}. Designing exact and approximation algorithms that break the $\Omega(n^{3/2})$ barrier remains an important research challenge in computational geometry. 

\subparagraph*{Our results:} We present a weighted approach to compute a maximum cardinality matching in an arbitrary bipartite graph.
Our main result is a new matching algorithm that takes as input a weighted bipartite graph $G(A\cup B, E)$ with every edge having a weight of $0$ or $1$. Let $w \leq n$ be an upper bound on the weight of any matching in $G$. Consider the subgraph induced by all the edges of $G$ with a weight $0$. Let $\{K_1,K_2,\ldots, K_l\}$ be the connected components in this subgraph and let, for any $1\le i \le l$,  $V_i$ and $E_i$ be the vertices and edges of $K_i$. We refer to each connected component $K_i$ as a \emph{piece}. Suppose $|V_i|=\BigO(r)$ and $|E_i|=\BigO(mr/n)$. Given $G$, we present an algorithm to compute a maximum matching in $G$ in $\runningtime$ time. Consider any graph in which removal of sub-linear number of ``separator'' vertices partitions the graph into connected components with $\BigO(r)$ vertices and $\BigO(mr/n)$ edges. We can apply our algorithm to any such graph by simply setting the weight of every edge incident on any separator vertex to $1$ and weights of all other edges to $0$. 

When all the edge weights are $1$ or all edge weights are $0$, our algorithm will be identical to the HK-Algorithm algorithm and runs in $\BigO(m\sqrt{n})$ time. However, if  we can carefully assign weights of $0$ and $1$ on the edges such that both $w$ and $r$ are sub-linear in $n$  and for some constant $\gamma < 3/2$, $wr=\BigO(n^{\gamma})$, then we can compute a maximum matching in $G$ in $o(m\sqrt{n})$ time.  
Using our algorithm, we obtain the following result for bottleneck matching:
\begin{itemize}
\item Given two point sets $A, B \subset \reals^2$ and an $0< \eps \le 1$, we reduce the problem of computing an $\eps$-approximate bottleneck matching to computing a maximum cardinality matching in a subgraph $\mathcal{G}$ of the complete bipartite graph on $A$ and $B$. We can, in $\BigO(n)$ time assign $0/1$ weights to the  $\BigO(n^2)$ edges of $\mathcal{G}$ with so that any matching has a weight of $\BigO(n^{2/3})$. Despite possibly $\Theta(n^2)$ edges in $\mathcal{G}$, we present an efficient implementation of our graph algorithm with $\BigOT(n^{4/3}/\eps^4)$ execution time that computes an $\eps$-approximate bottleneck matching for $d=2$; all previously known algorithms take $\Omega(n^{3/2})$ time. Our algorithm, for any fixed $d \ge 2$ dimensional space, computes an $\eps$-approximate bottleneck matching in $\frac{1}{\eps^{\BigO(d)}}n^{1+\frac{d-1}{2d-1}}\mathrm{poly}\log n$ time. (See Section~\ref{sec:bottleneck}).
\end{itemize}
The algorithm of Lahn and Raghvendra~\cite{lr_soda19} for $K_h$-minor free graphs requires the clusters  to have a small number of boundary vertices, which is used to create a compact representation of the residual network. This compact representation becomes prohibitively large as the number of boundary vertices increase. For instance, their algorithm has an execution time of $\Omega(m\sqrt{n})$ for the case where $G$ has a balanced vertex separator of $\Theta(n^{2/3})$.  Our algorithm, on the other hand, extends to any graph with a sub-linear vertex separator.  Given any graph $G(A \cup B,E)$ that has an easily computable balanced vertex separator for every subgraph $G'(V',E')$ of size $|V'|^{\delta}$, for $\delta\in [1/2,1)$, there is a $0/1$ weight assignment on edges of the graph  so that the weight of any matching is $\BigO(n^{\frac{2\delta}{1+\delta}})$ and $r= \BigO(n^{\frac{1}{1+\delta}})$. This assignment can be obtained by simply recursively sub-dividing the graph using balanced separators until each piece has $\BigO(r)$ vertices and $\BigO(mr/n)$ edges. All edges incident on the separator vertices are then assigned a weight of $1$ and all other edges are assigned a weight of $0$. As a result, we obtain an algorithm that computes the maximum cardinality matching in $\BigOT(mn^{\frac{\delta}{1+\delta}})$ time. 

\subparagraph*{Our approach:} Initially, we compute, in $\BigO(m\sqrt{r})$ time, a maximum matching within all pieces. Similar to the GT-Algorithm, the rest of our algorithm is based on a primal-dual method and executes in phases.  Each phase consists of two stages. The first stage conducts a Hungarian search and finds at least one augmenting path containing only zero slack (with respect to the dual constraints) edges.  Let the admissible graph be the subgraph induced by the set of all zero slack edges. Unlike in the GT-Algorithm, the second stage of our algorithm computes augmenting paths in the admissible graph that are not necessarily vertex-disjoint.  In the second stage, the algorithm iteratively initiates a DFS from every free vertex. When a DFS finds an augmenting path $P$, the algorithm will augment the matching immediately and terminate this DFS. Let all pieces of the graph that contain the edges of $P$ be \emph{affected}. Unlike the GT-Algorithm, which deletes all edges visited by the DFS, our algorithm deletes only those edges that were visited by the DFS and did not belong to an affected piece. Consequently, we allow for visited edges from an affected piece to be reused in another augmenting path. As a result, our algorithm computes several more augmenting paths per phase than the GT-Algorithm, leading to a reduction of number of phases from $\BigO(\sqrt{n})$ to $\BigO(\sqrt{w})$. Note, however, that the edges of an affected piece may now be visited multiple times by different DFS searches within the same phase. This increases the cumulative time taken by all the DFS searches in the second stage. However, we are able to bound the total number of affected pieces across all phases of the algorithm by $\BigO(w\log w)$. Since each piece has $\BigO(mr/n)$ edges, the total time spent revisiting these edges is bounded by $\BigO(mrw\log (w)/n)$. The total execution time can therefore be bounded by $\runningtime$.

\section{Preliminaries}
We are given a bipartite graph $G(A\cup B, E)$, where any edge $(a,b) \in E$  has a weight $\dist(a, b)$ of $0$ or $1$. Given a matching $M$, a vertex is \emph{free} if it is not matched in $M$. An \emph{alternating path} (resp. cycle) is a simple path (resp. cycle) that alternates between edges in $M$ and not in $M$. An \emph{augmenting path} is an alternating path that begins and ends at a free vertex.

A matching $M$ and an assignment of dual weights $y(\cdot)$ on the vertices of $G$ is \emph{feasible} if for any $(a,b) \in A\times B$:
\begin{align}
y(b) - y(a) \le \dist(a,b) \qquad \textnormal{if } (a, b)\not\in M,\label{eq:feas1}\\
y(a) - y(b) = \dist(a,b) \qquad \textnormal{if } (a, b) \in M. \label{eq:feas2}
\end{align}

  To assist in describing our algorithm, we first define a residual network and an augmented residual network with respect to a feasible matching $M, y(\cdot)$.
A \emph{residual network} $G_M$ with respect to a feasible matching $M$ is a directed graph where every edge $(a, b)$ is directed from $b$ to $a$ if $(a, b) \not\in M$  and from $a$ to $b$ if $(a,b) \in M$. The weight $s(a,b)$ of any edge is given by the slack of this edge with respect to feasibility conditions~\eqref{eq:feas1} and~\eqref{eq:feas2}, i.e., if $(a,b) \not\in M$, then $s(a ,b)= \dist(a,b)+y(a)-y(b)$ and $s(a,b)=0$ otherwise.  An \emph{augmented residual network} is obtained by adding to the residual network an additional vertex $s$ and additional directed edges from $s$ to every vertex in $B_F$, each of having a weight of $0$. We denote the augmented residual network as $G_M'$.

\section{Our algorithm}
\label{sec:graphmatch}
Throughout this section we will use $M$ to denote the current matching maintained by the algorithm and $A_F$ and $B_F$ to denote the vertices of $A$ and $B$ that are free with respect to $M$. Initially $M=\emptyset$, $A_F=A$, and $B_F=B$. Our algorithm consists of two steps. The first step, which we refer to as the \emph{preprocessing step}, will execute the Hopcroft-Karp algorithm and compute a maximum matching within every piece. Any maximum matching $M_{\opt}$ has at most $w$ edges with a weight of $1$ and the remaining edges have a weight of $0$. Therefore, $|M_{\opt}|-|M| \le w$. The time taken by the preprocessing step for $K_i$ is $\BigO(|E_i|\sqrt{|V_i|}) = \BigO(|E_i|\sqrt{r})$. Since the pieces are vertex disjoint, the total time taken across all pieces is $\BigO(m\sqrt{r})$. After this step, no augmenting path with respect to $M$ is completely contained within a single piece. We set the dual weight $y(v)$ of every vertex $v \in A\cup B$ to $0$. The matching $M$ along with the dual weights $y(\cdot)$ satisfies~\eqref{eq:feas1} and~\eqref{eq:feas2} and is feasible.

 The second step of the algorithm is executed in \emph{phases}. We describe phase $k$ of the algorithm. This phase consists of two stages.  
 
 \subparagraph*{First stage:} In the first stage, we construct the augmented residual network $G_M'$ and execute Dijkstra's algorithm with $s$ as the source. Let $\ell_v$ for any vertex $v$ denote the shortest path distance from $s$ to $v$ in $G_M'$. If a vertex $v$ is not reachable from $s$, we set $\ell_v$ to $\infty$. Let
 \begin{equation}
 \label{eq:closestfreevertex}\ell = \min_{v \in A_F}\ell_v.
 \end{equation}
 Suppose $M$ is a perfect matching or $\ell=\infty$, then this algorithm returns with $M$ as a maximum matching. Otherwise, we update the dual weight of any vertex $v \in A\cup B$ as follows. If $\ell_v \ge \ell$, we leave its dual weight unchanged. Otherwise, if $\ell_v < \ell$, we set $y(v) \leftarrow y(v) + \ell - \ell_v$. 
 After updating the dual weights,
 we construct the \emph{admissible graph} which consists of a subset of edges in the residual network $G_M$ that have zero slack. After the first stage, the matching $M$ and the updated dual weights are feasible. Furthermore, there is at least one augmenting path in the admissible graph. This completes the first stage of the phase.
 \subparagraph*{Second stage:} In the second stage, we initialize $G'$ to be the admissible graph and execute DFS to identify augmenting paths. For any augmenting path $P$ found during the DFS, we refer to the pieces that contain its edges as \emph{affected pieces} of $P$.
 
 Similar to the HK-Algorithm, the second stage of this phase will initiate a DFS from every free vertex $b \in B_F$ in $G'$. If the DFS does not lead to an augmenting path, we delete all edges that were visited by the DFS. On the other hand, if the DFS finds an augmenting path $P$, then the matching is augmented along $P$,  all edges that are visited by the DFS and do not lie in an affected piece of $P$ are deleted, and the DFS initiated at $b$ will terminate.  
 
 Now, we describe in detail the DFS initiated for a free vertex $b \in B_F$. Initially $P=\langle b=v_1\rangle$. Every edge of $G'$ is marked unvisited. At any point during the execution of DFS, the algorithm maintains a simple path $P = \langle b=v_1, v_2,\ldots, v_k\rangle$. The DFS search continues from the last vertex of this path as follows: 
 \begin{itemize}
     \item If there are no unvisited edges that are going out of $v_k$ in $G'$, 
     \begin{itemize}
         \item If $P=\langle v_1\rangle$,  remove all edges that were marked as visited from $G'$ and terminate the execution of DFS initiated at $b$.
         \item Otherwise, delete $v_k$ from $P$ and continue the DFS search from $v_{k-1}$,
     \end{itemize} 
     \item If there is an unvisited edge going out of $v_k$, let $(v_k,v)$ be this edge. Mark $(v_k,v)$ as visited. If $v$ is on the path $P$, continue the DFS from $v_k$. If $v$ is not on the path $P$, add $(v_k,v)$ to $P$, set $v_{k+1}$ to $v$, and,
     \begin{itemize}
    \item Suppose $v \in A_F$, then  $P$ is an augmenting path from $b$ to $v$. Execute the \augment\ procedure which augments $M$ along $P$. Delete from $G'$ every visited edge that does not belong to any affected piece of $P$ and terminate the execution of DFS initiated at $b$. 
     \item Otherwise, $v \in (A\cup B)\setminus A_F$. Continue the DFS from $v_{k+1}$.
      \end{itemize}
 \end{itemize}

 The \augment\ procedure receives a feasible matching $M$, a set of dual weights $y(\cdot)$, and an augmenting path $P$ as input. For any $(b,a) \in P\setminus M$, where $a \in A$ and $b \in B$,  set $y(b) \leftarrow y(b)-2\dist(a,b)$. Then augment $M$ along $P$ by setting $M\leftarrow M\oplus P$.    By doing so,  every edge of $M$ after augmentation satisfies the feasibility condition~\eqref{eq:feas2}.
 This completes the description of our algorithm. The algorithm maintains the following invariants during its execution:
 \begin{enumerate}
     \item[(I1)] The matching $M$ and the set of dual weights $y(\cdot)$ are feasible. Let $y_{\max}=\max_{v \in B} y(v)$. The dual weight of every vertex $v \in B_F$ is $y_{\max}$ and the dual weight for every vertex $v \in A_F$ is $0$.
     \item[(I2)]  For every phase that is fully executed prior to obtaining a maximum matching, at least one augmenting path is found and the dual weight of every free vertex of $B_F$  increases by at least $1$.
 \end{enumerate}
 
 \subparagraph*{Comparison with the GT-Algorithm:} In the GT-Algorithm, the admissible graph does not have any alternating cycles. Also, every augmenting path edge can be shown to not participate in any future augmenting paths that are computed in the current phase. By using these facts, one can show that the edges visited unsuccessfully by a DFS will not lead to an augmenting path in the current phase. In our case, however, admissible cycles can exist. Also, some edges on the augmenting path that have zero weight remain admissible after augmentation and may participate in another augmenting path in the current phase.
 We show, however, that any admissible cycle must be completely inside a piece and cannot span multiple pieces (Lemma~\ref{lem:nocycle}). Using this fact, we show that edges visited unsuccessfully by the DFS that do not lie in an affected piece will not participate in any more augmenting paths (Lemma~\ref{lem:final} and Lemma~\ref{lem:one-path}) in the current phase. Therefore, we can safely delete them.


\subparagraph*{Correctness:} From Invariant (I2), each phase of our algorithm will increase the cardinality of $M$ by at least $1$ and so,  our algorithm terminates with a maximum matching.

\subparagraph*{Efficiency:} 
We use the following notations to bound the efficiency of our algorithm.
Let $\{P_1,\ldots,P_t\}$ be the $t$ augmenting paths computed in the second step of the algorithm. Let  $\K_i$ be the set of affected pieces with respect to the augmenting path $P_i$. Let $M_0$ be the matching at the end of the first step of the algorithm. Let, for $1\le i\le t$,  $M_i = M_{i-1}\oplus P_i$, i.e., $M_i$ is the matching after the $i$th augmentation in the second step of the algorithm.  

The first stage is an execution of Dijkstra's algorithm which takes $\BigO(m+n\log n)$ time. Suppose there are $\lambda$ phases; then the cumulative time taken across all phases for the first stage is $\BigO(\lambda m+\lambda n\log n)$. In the second stage, each edge visited by a DFS is discarded for the remainder of the phase, provided it is not in an affected piece. Since each affected piece has $\BigO(mr/n)$ edges, the total time taken by all the DFS searches across all the $\lambda$ phases is bounded by $\BigO((m+n \log n)\lambda + (mr/n)\sum_{i=1}^t |\K_i|)$. In Lemma~\ref{lem:numberofphases}, we bound $\lambda$ by $\sqrt{w}$ and  $\sum_{i=1}^t |\K_i|$ by $\BigO(w\log w)$. Therefore, the total time taken by the algorithm including the time taken by preprocessing step is $\BigO(m\sqrt{r}+m\sqrt{w}+n\sqrt{w}\log n+\frac{mrw\log w}{n}) = \runningtime$.

\begin{lemma}
\label{lem:netcostslack1} \label{lem:netcostslack}
 For any feasible matching $M, y(\cdot)$ maintained by the algorithm, let $y_{\max}$ be the dual weight of every vertex of $B_F$. For any augmenting path $P$ with respect to $M$ from a free vertex $u \in B_F$ to a free vertex $v \in A_F$, 
 \begin{equation*}
     \dist(P)= y_{\max}+\sum_{(a,b) \in P}s(a,b).
 \end{equation*}
\end{lemma}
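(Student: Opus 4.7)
The plan is a direct telescoping calculation using the feasibility conditions~\eqref{eq:feas1} and~\eqref{eq:feas2}. Write the augmenting path as $P = \langle u = b_1, a_1, b_2, a_2, \ldots, b_k, a_k = v\rangle$ with each $b_i \in B$ and each $a_i \in A$. Since $b_1 \in B_F$ is free in $M$, the non-matching edges of $P$ are exactly $(b_i, a_i)$ for $1 \le i \le k$, and the matching edges of $P$ are exactly $(a_i, b_{i+1})$ for $1 \le i \le k-1$.

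Next, I would substitute the slack expressions. For each matching edge, $s(a_i, b_{i+1}) = 0$ by definition, and feasibility~\eqref{eq:feas2} gives $\dist(a_i, b_{i+1}) = y(a_i) - y(b_{i+1})$. For each non-matching edge, $s(b_i, a_i) = \dist(b_i, a_i) + y(a_i) - y(b_i)$. Plugging these into $\dist(P) - \sum_{(a,b)\in P} s(a,b)$, the $\dist(b_i, a_i)$ terms cancel and what remains is
\[ \dist(P) - \sum_{(a,b)\in P} s(a,b) = \sum_{i=1}^{k-1}\bigl(y(a_i) - y(b_{i+1})\bigr) - \sum_{i=1}^{k}\bigl(y(a_i) - y(b_i)\bigr). \]
Both sums involve the same sequence of dual weights at interior vertices; a routine telescoping leaves only the boundary terms $y(b_1) - y(a_k)$.

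Finally, I would invoke invariant (I1), which states that every vertex of $B_F$ has dual weight $y_{\max}$ and every vertex of $A_F$ has dual weight $0$. Since $b_1 = u \in B_F$ and $a_k = v \in A_F$, this gives $y(b_1) - y(a_k) = y_{\max}$, and rearranging yields the claimed identity.

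There is no real obstacle here; the proof is pure bookkeeping. The only two details to be careful about are (a) correctly identifying which edges of $P$ lie in $M$, which is immediate because $P$ begins at a free vertex of $B$ and alternates, and (b) explicitly invoking the second half of invariant (I1) to pin down the dual weights at the two free endpoints.
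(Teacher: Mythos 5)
Your proof is correct and follows the same telescoping argument as the paper: substitute the feasibility/slack expressions for matching and non-matching edges, observe that interior dual weights cancel, and apply invariant (I1) to the two free endpoints. Your version simply makes the telescoping more explicit by indexing the path, but the method and key ingredients are identical.
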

\begin{proof}
 The weight of $P$ is

\[\dist(P) =  \sum_{(a,b) \in P} \dist(a,b) 
= \sum_{(a,b) \in P\setminus M} (y(b)-y(a) + s(a,b)) + \sum_{(a,b) \in P\cap M} (y(a) - y(b)).\]
Since every vertex on $P$ except for $u$ and $v$ participates in one edge of $P\cap M$ and one edge of $P \setminus M$, we can write the above equation as
\begin{equation}
\label{eq:distance}
    \dist(P) = y(u) - y(v) + \sum_{(a,b)\in P \setminus M} s(a,b) = y(u) - y(v) + \sum_{(a,b)\in P} s(a,b).
\end{equation}

The last equality follows from the fact that edges of $P\cap M$ satisfy~\eqref{eq:feas2} and have a slack of zero. From (I1), we get that $y(u)=y_{\max}$ and $y(v)=0$, which gives,
 \begin{equation*}
     \dist(P)= y_{\max}+\sum_{(a,b) \in P}s(a,b).
 \end{equation*}
\end{proof}
\begin{lemma}
\label{lem:nocycle}
For any feasible matching $M, y(\cdot)$ maintained by the algorithm, and for any alternating cycle $C$ with respect to $M$, if $\dist(C)>0$, then 
 \begin{equation*}
     \sum_{(a,b) \in P}s(a,b) > 0,
 \end{equation*}
 i.e., $C$ is not a cycle in the admissible graph.
\end{lemma}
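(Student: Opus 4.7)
The plan is to mimic the calculation in Lemma~\ref{lem:netcostslack1}, but specialized to a cycle rather than a path. First, I would expand $\dist(C)$ by splitting $C$ into its matched and unmatched edges and applying the feasibility conditions~\eqref{eq:feas1} and~\eqref{eq:feas2}: each edge $(a,b) \in C \setminus M$ contributes $y(b) - y(a) + s(a,b)$ (by~\eqref{eq:feas1}), and each edge $(a,b) \in C \cap M$ contributes $y(a) - y(b)$ with zero slack (by~\eqref{eq:feas2}).

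The key step is the telescoping of the dual weights. Because $C$ is an alternating cycle in a bipartite graph, every vertex $v$ on $C$ is incident to exactly one edge of $C \cap M$ and exactly one edge of $C \setminus M$. If $v \in A$, its dual $y(v)$ appears with a $+$ sign from the matched edge and a $-$ sign from the unmatched edge; if $v \in B$, the signs are swapped. Either way, every $y(v)$ term appears once positively and once negatively and cancels. This leaves the identity
\begin{equation*}
    \dist(C) = \sum_{(a,b) \in C \setminus M} s(a,b) = \sum_{(a,b) \in C} s(a,b),
\end{equation*}
where the last equality uses that matched edges have zero slack. Substituting the hypothesis $\dist(C) > 0$ gives $\sum_{(a,b) \in C} s(a,b) > 0$, so at least one edge of $C$ has strictly positive slack and $C$ cannot lie in the admissible graph.

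I do not anticipate any real obstacle: the argument is essentially the path computation of Lemma~\ref{lem:netcostslack1} with the boundary terms $y(u)$ and $y(v)$ gone, because a cycle has no endpoints. The only point to handle carefully is verifying that each vertex of an alternating bipartite cycle participates in exactly one matched and one unmatched edge of $C$, so that the telescoping of dual weights is complete. This fact, together with Lemma~\ref{lem:nocycle} itself, justifies the later claim that an admissible alternating cycle must consist entirely of weight-$0$ edges and therefore lies within a single piece.
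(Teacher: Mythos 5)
Your argument is correct and is essentially the paper's own: the paper simply invokes the already-derived identity~\eqref{eq:distance} with the first and last vertices coinciding (so $y(u)-y(v)=0$), whereas you re-derive the telescoping from scratch for the cycle. Both routes yield $\dist(C)=\sum_{(a,b)\in C}s(a,b)$ and the conclusion follows identically.
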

\begin{proof}
The claim follows from~\eqref{eq:distance} and the fact that the first vertex $u$ and the last vertex $v$ in a cycle are the same. 
\end{proof}
\begin{lemma}
\label{lem:numberofphases}
The total number of phases is  $\BigO(\sqrt{w})$ and the total number of affected pieces is $\BigO(w\log w)$, i.e., $\sum_{i=1}^t |\K_i| = \BigO(w\log w)$.
\end{lemma}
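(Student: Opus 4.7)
The proof has two parts, which I would tackle in order.

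\textbf{Part 1: Bounding phases.} First, by Invariant~(I2), $\ymax$ increases by at least $1$ each phase, so after $k$ phases $\ymax \ge k$. Next, by Lemma~\ref{lem:netcostslack} and non-negativity of slacks (feasibility), every augmenting path with respect to the current matching has weight at least $\ymax$. Then I would apply the Hopcroft--Karp trick: decompose $M\oplus M_{\opt}$ into $|M_{\opt}|-|M|$ vertex-disjoint augmenting paths (plus alternating cycles) of total weight at most $\dist(M)+\dist(M_{\opt})\le 2w$, obtaining $|M_{\opt}|-|M| \le 2w/\ymax$. Finally, setting $k=\sqrt{w}$ gives at most $2\sqrt{w}$ remaining augmentations after $\sqrt{w}$ phases, and each phase performs at least one augmentation (Invariant~(I2)), for a total of $\BigO(\sqrt{w})$ phases.

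\textbf{Part 2: Bounding affected pieces.} First, I would prove the structural bound $|\K_i| \le \dist(P_i)+1$: removing the $\dist(P_i)$ weight-$1$ edges of $P_i$ partitions its remaining weight-$0$ edges into at most $\dist(P_i)+1$ maximal sub-paths, each lying in a single piece (pieces being connected components of the weight-$0$ subgraph). This reduces the task to bounding $t + \sum_i \dist(P_i)$. I would bound $t \le w$ from $|M_{\opt}|-|M_0|\le w$, since after preprocessing $M_0$ is a maximum matching inside each piece, so $M_{\opt}$ and $M_0$ can differ only on at most $w$ weight-$1$ edges.

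Next, for $\sum_i \dist(P_i) = \BigO(w\log w)$, I would work phase by phase and combine via a harmonic series. For phase $k$, I would first bound the number of augmenting paths $n_k$ by $2w/k$ (applying the same symmetric-difference argument to the matching at the start of phase $k$). Then I would leverage the fact that weight-$1$ edges belong to no piece, so once a DFS visits one it is deleted from $G'$; hence within a single phase, each weight-$1$ edge appears on at most one augmenting path. This identifies $\sum_{i\in\text{phase }k}\dist(P_i)$ with the cardinality of a set of distinct weight-$1$ edges used in phase $k$. Using the matching-weight identity $\sum_i(\dist(P_i\setminus M_{i-1})-\dist(P_i\cap M_{i-1})) \le \dist(M_t)-\dist(M_0) \le w$, and amortizing the weight-$1$ edges added to against those removed from the matching during each phase, the plan is to derive a per-phase bound of $\BigO(w/k)$ on $\sum_{i\in\text{phase }k}\dist(P_i)$, whose sum over $k=1,\ldots,\BigO(\sqrt{w})$ yields $\BigO(w\log w)$ via $\sum_{k=1}^{\sqrt{w}}1/k = \BigO(\log w)$.

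\textbf{Main obstacle.} The per-phase bound is the hard part. A direct count only shows that the weight-$1$ matched edges removed in one phase is at most the number of matched weight-$1$ edges (bounded by $w$), summing to a suboptimal $\BigO(w^{3/2})$ across all phases. Obtaining the tighter $\BigO(w/k)$ bound requires a potential-function or amortized argument---likely pairing each weight-$1$ edge ``flip'' with a monotone increase in some dual quantity (for instance, the dual weights of vertices in $A$, which only increase and are bounded by the final $\ymax=\BigO(\sqrt{w})$)---that prevents a weight-$1$ edge from being repeatedly flipped across every phase, capping its total contribution logarithmically.
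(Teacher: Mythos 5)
Your Part 1 is correct and matches the paper's argument exactly: $\ymax \ge k$ after $k$ phases, the symmetric-difference decomposition gives $|M_\opt|-|M| \le 2w/\ymax$, and each completed phase finds at least one augmenting path, so $\BigO(\sqrt{w})$ phases suffice. Your structural observation $|\K_i|\le \dist(P_i)+1$ and the bound $t\le w$ are also fine.

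The gap you flag in Part 2 is real, and it stems from choosing the wrong index for the harmonic sum. You try to show the phase-$k$ contribution $\sum_{i\in\text{phase }k}\dist(P_i)$ is $\BigO(w/k)$, but the direct ingredients only give $\BigO(w)$ per phase: within a phase $k$ every admissible augmenting path has $\dist(P_i)=\ymax(k)$, and the symmetric-difference bound gives the number of paths in the phase $n_k \le 2w/\ymax(k)$, so the product $n_k\cdot\ymax(k)\le 2w$, yielding $\BigO(w^{3/2})$ overall — exactly the obstruction you noticed. Your speculative remedies (amortizing weight-$1$ edge flips against dual increases) do not obviously close this, and the ``once visited, deleted'' observation only controls within-phase reuse, not the per-phase total. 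The paper's fix is simpler and sidesteps the per-phase bound entirely: index the harmonic sum by the augmenting path number $l$, not the phase. When $P_l$ is discovered the current matching is $M_{l-1}$, which is short of $M_\opt$ by exactly $j = t-l+1$; applying the very same symmetric-difference argument at that instant gives $j\cdot y_{\max}^l \le 2w$, and since $P_l$ is admissible, Lemma~\ref{lem:netcostslack} gives $\dist(P_l)=y_{\max}^l$, hence $\dist(P_l)\le 2w/(t-l+1)$. Summing over $l=1,\dots,t$ gives $\sum_l \dist(P_l) \le 2w\sum_{l=1}^t \frac{1}{t-l+1} = \BigO(w\log t) = \BigO(w\log w)$. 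The point you were missing is that the quantity driving the harmonic decay is the number of \emph{remaining augmentations} $t-l+1$, which decreases by one with every augmenting path, rather than the phase counter $k$, which only decreases the remaining count by at least one per phase. Re-applying the $j\cdot\ymax\le 2w$ inequality at each augmentation, not just at each phase boundary, is what makes the telescoping tight.
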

\begin{proof}
Let $M_{\opt}$ be a maximum matching, which has weight at most $w$. Consider any phase $k$ of the algorithm. By (I2), the dual weight $y_{\max}$ of every free vertex in $B_F$ is at least $k$. The symmetric difference of $M$ and $M_{\opt}$ will contain $j=|M_{\opt}|-|M|$ vertex-disjoint augmenting paths. Let $\{\mathcal{P}_1,\ldots, \mathcal{P}_j\}$ be these augmenting paths. These paths contain edges of $M_\opt$ and $M$, both of which are of weight at most $w$. Therefore, the sum of weights of these paths is
\begin{equation*}
\sum_{i=1}^j \dist(\mathcal{P}_i) \le 2w.
\end{equation*}
Let $y_{\max}$ be the dual weight of every vertex $b$ of $B$ that is free with respect to $M$. i.e., $b \in B_F$. From (I2), $y_{\max} \ge k$. From Lemma~\ref{lem:netcostslack} and the fact that the slack on every edge is non-negative, we immediately get,
\begin{equation}
\label{eq:phaseslength}
2w \ge \sum_{i=1}^j\dist(\mathcal{P}_i) \ge j y_{\max}\ge  jk.
\end{equation}
When $\sqrt{w} \leq k <\sqrt{w}+1$, it follows from the above equation that $j = |M_{\opt}|-|M| \le  2\sqrt{w}$. From (I2), we will compute at least one augmenting path in each phase and so the remaining $j$ unmatched vertices are matched in at most $2\sqrt{w}$ phases. This bounds the total number of phases by $3\sqrt{w}$.

Recollect that $\{P_1,\ldots, P_t\}$ are the augmenting paths computed by the algorithm.  The matching $M_0$ has $|M_{\opt}|-t$ edges. Let $y_{\max}^l$ correspond to the dual weight of the free vertices of $B_F$ when the augmenting path $P_l$ is found by the algorithm. From Lemma~\ref{lem:netcostslack},  and the fact that $P_l$ is an augmenting path consisting of zero slack edges, we have $y_{\max}^l = \dist(P_l)$. Before augmenting along $P_l$, there are $|M_{\opt}|-t+l-1$ edges in $M_{l-1}$ and $j=|M_{\opt}|-|M_{l-1}| = t-l+1$. Plugging this in to~\eqref{eq:phaseslength}, we get $\dist(P_l)= y_{\max}^l \le \frac{2w}{t-l+1}$. Summing over all $1\le l\le t$, we get, 
\begin{equation}
\label{eq:lengthbound}
\sum_{l=1}^t \dist(P_l) \le w \sum_{l=1}^t \frac{2}{t-l+1} = \BigO(w \log t) = \BigO(w \log w).
\end{equation}
 For any augmenting path $P_l$, the number of affected pieces is upper bounded by the number of non-zero weight edges on $P_l$, i.e., $|\K_l| \le \dist(P_l)$. Therefore, 
 \[\sum_{l=1}^t |\K_l| \le \sum_{l=1}^t \dist(P_l) = \BigO(w \log w).\]
\end{proof}

\section{Proof of invariants}
We now prove (I1) and (I2).
 Consider any phase $k$ in the algorithm. Assume inductively that at the end of phase $k-1$, (I1) and (I2) hold. We will show that (I1) and (I2) also hold at the end of the phase $k$.  We establish a lemma that will help us prove (I1) and (I2).
 
 \begin{lemma}
 \label{lem:matchedge}
 For any edge $(a,b) \in M$, let $\ell_a$ and $\ell_b$ be the distances returned by Dijkstra's algorithm during the first stage of phase $k$, then $\ell_a=\ell_b$.
 \end{lemma}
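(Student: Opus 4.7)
}

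The plan is to exploit the very restricted incoming-edge structure of a matched vertex $b\in B$ in the augmented residual network $G_M'$, and observe that the matched edge $(a,b)$ has slack $0$. First I would enumerate all edges directed into $b$ in $G_M'$. By the definition of $G_M$, unmatched edges $(a',b)\notin M$ are oriented from $b$ to $a'$ and therefore contribute only outgoing edges at $b$, while matched edges $(a',b)\in M$ are oriented from $a'$ into $b$. Because $M$ is a matching, $b$ has at most one matched partner, and by hypothesis that partner is $a$. The only remaining candidate is the auxiliary edge $(s,b)$, which exists if and only if $b\in B_F$; but since $(a,b)\in M$, the vertex $b$ is matched and not free, so no such edge exists. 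Consequently, the edge $(a,b)$ is the unique incoming edge to $b$ in $G_M'$.

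Next I would turn this uniqueness into the desired equality $\ell_a=\ell_b$ by case analysis on reachability. If $b$ is reachable from $s$, any shortest path to $b$ in $G_M'$ must end with the edge $(a,b)$, which has weight $s(a,b)=0$ by feasibility condition~\eqref{eq:feas2}. Hence
\begin{equation*}
\ell_b \;=\; \ell_a + s(a,b) \;=\; \ell_a.
\end{equation*}
If instead $\ell_b=\infty$, then $a$ cannot be reachable either, for otherwise relaxing the single edge $(a,b)$ in Dijkstra's algorithm would give $\ell_b\le \ell_a+0<\infty$, a contradiction. Thus $\ell_a=\infty=\ell_b$ in this case.

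There is no real obstacle in the argument; the only step that needs care is making sure the list of incoming edges to $b$ in $G_M'$ is complete (unmatched edges at $b$, the unique matched edge from $a$, and possibly the source edge from $s$), and then ruling out the source edge using $b\notin B_F$. Once the uniqueness of the incoming edge is established, the zero slack of matched edges forces the equality of the Dijkstra labels directly.
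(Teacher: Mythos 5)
Your proof is correct and follows essentially the same approach as the paper: both arguments observe that the matched edge $(a,b)$ is the unique edge directed into $b$ in $G_M'$, so any shortest path to $b$ must go through $a$, and then use $s(a,b)=0$ to conclude $\ell_b=\ell_a$. Your write-up is slightly more explicit (enumerating possible incoming edges, ruling out the source edge since $b\notin B_F$, and handling the $\ell_b=\infty$ case), but these are elaborations of the same argument rather than a different route.
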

\begin{proof}
 The only edge directed towards $b$ is an edge from its match $a$. Therefore, any path from $s$ to $b$ in the augmented residual network, including the shortest path, should pass through $a$. Since the slack on any edge of $M$ is $0$, $\ell_b=\ell_a+s(a,b)= \ell_a$.   
 \end{proof}
 
\begin{lemma}
\label{lem:feas}
 Any matching $M$ and dual weights $y(\cdot)$ maintained during the execution of the algorithm are feasible. 
\end{lemma}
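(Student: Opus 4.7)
The plan is to induct on the sequence of operations that modify either $M$ or the dual weights $y(\cdot)$. The base case is immediate: initially $M = \emptyset$ and $y(v) = 0$ for all $v$, so~\eqref{eq:feas2} is vacuous and~\eqref{eq:feas1} reduces to $0 \le \dist(a,b)$, which holds since edge weights are $0$ or $1$. There are exactly two types of updates to analyze: the bulk dual update at the start of each phase (first stage), and the \augment\ update in the second stage.

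For the first stage, $M$ is unchanged and only $y$ is updated, via $y'(v) = y(v) + \max\{0, \ell - \ell_v\}$. For $(a,b) \notin M$, the residual edge is directed from $b$ to $a$ with weight $s(a,b) \ge 0$, so Dijkstra gives the shortest-path inequality $\ell_a \le \ell_b + s(a,b)$. I would case-analyze on the signs of $\ell - \ell_a$ and $\ell - \ell_b$: in the two cases where both endpoints are updated the same way, the inequality $y'(b) - y'(a) \le \dist(a,b)$ reduces to $\ell_a - \ell_b \le s(a,b)$; when $\ell_a \ge \ell > \ell_b$, it suffices to show $\ell - \ell_b \le s(a,b)$, which I would derive from the chain $\ell \le \ell_a \le \ell_b + s(a,b)$; when $\ell_a < \ell \le \ell_b$, the left side of~\eqref{eq:feas1} can only decrease. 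For a matching edge $(a,b)$, Lemma~\ref{lem:matchedge} gives $\ell_a = \ell_b$, so the $y$-increment at $a$ and at $b$ agree, preserving the equality $y(a) - y(b) = \dist(a,b)$.

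For the second stage, \augment\ receives an augmenting path $P$ consisting of zero-slack admissible edges, decreases $y(b)$ by $2\dist(a,b)$ for each $(b,a) \in P \setminus M$, and then replaces $M$ by $M \oplus P$. For a newly matched edge $(b,a) \in P \setminus M$, zero slack means $y(b) - y(a) = \dist(a,b)$, hence after the decrement $y'(a) - y'(b) = y(a) - y(b) + 2\dist(a,b) = \dist(a,b)$, verifying~\eqref{eq:feas2}. For a newly unmatched edge $(a,b) \in P \cap M$, $y(a)$ is unchanged and $y(b)$ is (possibly) decreased, so $y'(b) - y'(a) \le y(b) - y(a) = -\dist(a,b) \le \dist(a,b)$, verifying~\eqref{eq:feas1}. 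For an edge $(a,b) \notin M$ and $\notin P$, at most one endpoint had its dual weight lowered, so the left side of~\eqref{eq:feas1} can only decrease. Finally, matching edges outside $P$ have both endpoints untouched, and their equality is preserved.

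The main obstacle is the cross-case in the first-stage analysis, where only one endpoint of a non-matching edge has its dual weight raised, since here the desired bound is not just a weakening of the prior inequality and genuinely uses the shortest-path property together with $\ell_a \ge \ell$. All remaining cases reduce to either Lemma~\ref{lem:matchedge} or a one-line computation exploiting the zero-slack condition of the augmenting path.
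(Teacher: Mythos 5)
Your proof is correct and follows essentially the same approach as the paper: a case analysis on the signs of $\ell-\ell_u$ and $\ell-\ell_v$ for the first-stage dual update (using Dijkstra's shortest-path inequality and Lemma~\ref{lem:matchedge}), followed by a direct zero-slack computation for the \augment\ step. The only nitpick is that the true base case for the second step is the post-preprocessing matching rather than $M=\emptyset$, but since the preprocessing phase matches only weight-$0$ edges with $y\equiv 0$, condition~\eqref{eq:feas2} holds there trivially, so your induction goes through unchanged.
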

 \begin{proof}
We begin by showing that the dual weight modifications in the first stage of phase $k$ will not violate dual feasibility conditions~\eqref{eq:feas1} and~\eqref{eq:feas2}. Let $\tilde{y}(\cdot)$ denote the dual weights after the execution of the first stage of the algorithm. Consider any edge $(u,v)$ directed from $u$ to $v$. 
There are the following possibilities:

 If both $\ell_u$ and $\ell_v$ are greater than or equal to $\ell$, then $y(u)$ and $y(v)$ remain unchanged and the edge remains feasible. 

 If both $\ell_u$ and $\ell_v$ are less than $\ell$, suppose $(u,v) \in M$. Then, from Lemma~\ref{lem:matchedge}, $\ell_u=\ell_v$. We have, $\tilde{y}(u) = y(u)+\ell-\ell_u$, $\tilde{y}(v)=y(v)+\ell-\ell_v$, and $\tilde{y}(u)-\tilde{y}(v)= y(u)-y(v) +\ell_v - \ell_u = \dist(u,v)$ implying $(u,v)$ satisfies~\eqref{eq:feas2}.
 
 If $\ell_u$ and $\ell_v$ are less than $\ell$ and $(u,v)\not\in M$, then $u \in B$ and $v \in A$. By definition, $y(u) - y(v) + s(u,v)= \dist(u,v)$. By the properties of shortest paths, for any edge $(u,v)$, $\ell_v -\ell_u \le s(u,v)$.  The dual weight of $u$ is updated to $y(u)+\ell-\ell_u$ and dual weight of $v$ is updated to $y(v)+\ell-\ell_v$. The difference in the updated dual weights $\tilde{y}(u)-\tilde{y}(v) = (y(u) + \ell - \ell_u) - (y(v) + \ell - \ell_v) = y(u) - y(v) + \ell_v-\ell_u \le y(u) - y(v)+s(u,v) = \dist(u,v)$. Therefore, $(u,v)$ satisfies~\eqref{eq:feas1}.

If $\ell_u < \ell$ and $\ell_v \ge \ell$, then, from Lemma~\ref{lem:matchedge}, $(u,v)\not\in M$, and so $u \in B$ and $v \in A$. From the shortest path property, for any edge $(u,v)$, $\ell_v -\ell_u \le s(u,v)$. Therefore,
\begin{equation*}
    \tilde{y}(u) - \tilde{y}(v) =y(u)-y(v)+\ell-\ell_u \le y(u)-y(v)+\ell_v -\ell_u \le y(u)-y(v)+s(u,v) = \dist(u,v),
\end{equation*}
implying $(u,v)$ satisfies~\eqref{eq:feas1}.

If $\ell_u \ge \ell$ and $\ell_v < \ell$, then, from Lemma~\ref{lem:matchedge}, $(u,v)\not\in M$, and so $u \in B$ and $v \in A$. Since $\ell_v < \ell$,  we have,
\begin{equation*}
    \tilde{y}(u) - \tilde{y}(v) =y(u)-y(v)-\ell+\ell_v < y(u)-y(v) \le \dist(u,v),
\end{equation*}
implying $(u,v)$ satisfies~\eqref{eq:feas1}.

In the second stage of the algorithm, when an augmenting path $P$ is found, the dual weights of some vertices of $B$ on $P$ decrease and the directions of edges of $P$ change. We argue these operations do not violate feasibility. Let $\tilde{y}(\cdot)$ be the dual weights after these operations. Consider any edge $(a,b) \in A \times B$.  If $b$ is not on $P$, then the feasibility of $(a,b)$ is unchanged. If $b$ is on $P$ and $a$ is not on $P$, then $\tilde{y}(b) \leq y(b)$, and $\tilde{y}(b) - \tilde{y}(a) \leq y(b) - y(a) \leq \dist(a,b)$, implying \eqref{eq:feas1} holds. The remaining case is when both $a$ and $b$ are on $P$. Consider if $(a,b) \in M$ after augmentation. Prior to augmentation, $(a,b)$ was an admissible edge not in $M$, and we have $y(b) - y(a) = \dist(a,b)$ and $\tilde{y}(b) = y(b) - 2c(a,b)$. So, $\tilde{y}(a) - \tilde{y}(b) = y(a) - (y(b) - 2\dist(a,b)) = y(a) - y(b) + 2\dist(a,b) = \dist(a,b)$, implying \eqref{eq:feas2} holds. Finally, consider if $(a,b) \notin M$ after augmentation. Then, prior to augmentation, $(a,b)$ was in $M$, and $y(a) - y(b) = \dist(a,b)$. So, $\tilde{y}(b) - \tilde{y}(a) \leq y(b) - y(a) = -\dist(a,b) \leq \dist(a,b)$, implying \eqref{eq:feas1} holds. We conclude the second stage maintains feasibility.

\end{proof}
Next we show that the dual weights $A_F$ are zero and the dual weights of all vertices of $B_F$ are equal to $\ymax$. At the start of the second step, all dual weights are $0$. During the first stage, the dual weight of any vertex $v$ will increase by $\ell - \ell_v$ only if $\ell_v < \ell$. By~\eqref{eq:closestfreevertex}, for every free vertex $a \in A_F$, $\ell_a \ge \ell$, and so the dual weight of every free vertex of $A$ remains unchanged at $0$. Similarly, for any free vertex $b \in B_F$, $\ell_b = 0$, and the dual weight increases by $\ell$, which is the largest possible increase. This implies that every free vertex in $B_F$ will have the same dual weight of $y_{\max}$. In the second stage, matched vertices of $B$ undergo a decrease in their dual weights, which does not affect vertices in $B_F$. Therefore, the dual weights of vertices of $B_F$ will still have a dual weight of $y_{\max}$ after stage two. This completes the proof of (I1).

Before we prove (I2), we will first establish a property of the admissible graph after the dual weight modifications in the first stage of the algorithm.
 \begin{lemma}
 \label{lem:firststage}
 After the first stage of each phase, there is an augmenting path consisting of admissible edges.
 \end{lemma}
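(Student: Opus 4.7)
The plan is to extract an augmenting path directly from a shortest path tree produced by Dijkstra's algorithm in the first stage, and then show that every edge on that path becomes admissible after the dual-weight update.

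First, observe that if the algorithm has not already terminated in the first stage, then $\ell < \infty$, so there is a free vertex $a^*\in A_F$ with $\ell_{a^*}=\ell$. Dijkstra's algorithm gives a shortest $s$-to-$a^*$ path in $G_M'$; this path begins with the artificial edge $(s,b^*)$ for some $b^*\in B_F$ and is then a directed $b^*$-to-$a^*$ path $P$ in the residual network $G_M$. Because edges of $G_M$ not in $M$ are oriented $B\to A$ and edges of $M$ are oriented $A\to B$, the path $P$ automatically alternates between non-matching and matching edges, and both endpoints are free, so $P$ is an augmenting path with respect to $M$. What remains is to verify that every edge of $P$ has zero slack under the updated dual weights $\tilde y(\cdot)$.

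The key step is a short case analysis on each edge $(u,v)$ of $P$, using that the shortest-path property gives $\ell_v = \ell_u + s(u,v)$ and that every vertex on $P$ satisfies $\ell_{\cdot}\le\ell$. If $\ell_u<\ell$ and $\ell_v<\ell$, then $\tilde y(u)=y(u)+\ell-\ell_u$ and $\tilde y(v)=y(v)+\ell-\ell_v$, so a direct computation (splitting on whether $(u,v)\in M$ and using Lemma~\ref{lem:matchedge} in the matched case) shows that the new slack equals $s(u,v)+\ell_u-\ell_v=0$. If $\ell_u<\ell$ and $\ell_v=\ell$ (which happens only at the final edge into $a^*$, since any internal $A$-vertex $a$ on $P$ with $\ell_a=\ell$ must also satisfy $a\in A_F$ by Lemma~\ref{lem:matchedge} applied to its matching edge, contradicting minimality of $\ell$ among free $A$-vertices), then $\tilde y(v)=y(v)$ and the same calculation again yields new slack $0$. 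The remaining case $\ell_u=\ell_v=\ell$ with $(u,v)\in M$ leaves both duals unchanged and the slack stays at $0$.

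The main obstacle I anticipate is precisely this last boundary case: when an edge on the shortest path has one endpoint with $\ell_\cdot=\ell$, the update rule leaves its dual weight unchanged, and I must confirm that this asymmetric update still produces a zero-slack edge. Once the case analysis above is carried out, every edge of $P$ is admissible, so $P$ is an augmenting path consisting entirely of admissible edges, proving the lemma.
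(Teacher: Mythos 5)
Your proposal takes essentially the same route as the paper: extract the augmenting path $P_a$ from the Dijkstra shortest-path tree to a nearest free $a^*\in A_F$, then verify edge-by-edge that the updated slacks are zero, using Lemma~\ref{lem:matchedge} for matching edges and the optimal-substructure identity $\ell_v-\ell_u=s(u,v)$ for non-matching edges. The conclusion and computations are correct.

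However, your parenthetical claim --- that the case $\ell_u<\ell,\ \ell_v=\ell$ ``happens only at the final edge into $a^*$, since any internal $A$-vertex $a$ on $P$ with $\ell_a=\ell$ must also satisfy $a\in A_F$ by Lemma~\ref{lem:matchedge}'' --- is not justified and is in fact false. Lemma~\ref{lem:matchedge} only gives $\ell_a=\ell_b$ for a matched pair; it does not force $a$ to be free. If the suffix of the shortest path from some internal vertex onward has all zero-slack edges, then every vertex on that suffix has $\ell_\cdot=\ell$, and several internal matched $A$-vertices on $P$ will have $\ell_a=\ell$. Your taxonomy then silently omits the case $\ell_u=\ell_v=\ell$ with $(u,v)\notin M$.

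The worry that motivated this detour is actually a non-issue. On the shortest path to $a^*$ every vertex $v$ has $\ell_v\le\ell$, and for such vertices the uniform formula $\tilde y(v)=y(v)+\ell-\ell_v$ is correct in all cases: when $\ell_v<\ell$ it matches the update rule, and when $\ell_v=\ell$ it reduces to $\tilde y(v)=y(v)$, which matches the ``leave unchanged'' rule. This is exactly what the paper does --- it applies the uniform formula to every vertex on $P_a$ and never splits on whether $\ell_v$ equals $\ell$. Dropping the parenthetical and the boundary-case split, and simply noting that $\tilde y(v)=y(v)+\ell-\ell_v$ holds for every vertex on the path, gives the clean argument.
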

  \begin{proof}
  Let $a \in A_F$ be a free vertex whose shortest path distance from $s$ in the augmented residual network is $\ell$, i.e., $\ell_a=\ell$. Let $P$ be the shortest path from $s$ to $a$ and let $P_a$ be the path $P$ with $s$ removed from it. Note that $P_a$ is an augmenting path. We will show that after the dual updates in the first stage, every edge of $P_a$ is admissible.
Consider any edge $(u,v) \in P_a \cap M$, where $u \in A$ and $v \in B$. From Lemma~\ref{lem:matchedge}, $\ell_u=\ell_v$. Then the updated dual weights are $\tilde{y}(u) = y(u)+\ell-\ell_u$ and $\tilde{y}(v) = y(v)+\ell-\ell_v$.   Therefore, $\tilde{y}(u)-\tilde{y}(v)=y(u)-y(v) -\ell_u+\ell_v= \dist(u,v)$, and $(u,v)$ is admissible.
Otherwise, consider any edge $(u,v) \in P_a \setminus M$, where $u \in B$ and $v\in A$. From the optimal substructure property of shortest paths, for any edge $(u,v) \in P_a$ directed from $u$ to $v$,  $\ell_v-\ell_u=s(u,v)$. Therefore, the difference of the new dual weights is
\begin{equation*}
    \tilde{y}(u)-\tilde{y}(v)= y(u)+\ell-\ell_u -y(v) -\ell+\ell_v = y(u)-y(v)-\ell_u+\ell_v = y(u)-y(v)+s(u,v)=\dist(u,v),
\end{equation*}
implying that $(u,v)$ is admissible.
 \end{proof}

\subparagraph*{Proof of (I2):} From Lemma~\ref{lem:firststage}, there is an augmenting path of admissible edges at the end of the first stage of any phase. Since we execute a DFS from every free vertex $b \in B_F$ in the second stage, we are guaranteed to find an augmenting path. Next, we show in Corollary~\ref{cor:one-path} that there is no augmenting path of admissible edges at the end of stage two of phase $k$, i.e., all augmenting paths in the residual network have a slack of at least $1$. This will immediately imply that the first stage of phase $k+1$ will have to increase the dual weight of every free vertex by at least $1$ completing the proof for (I2). 

Edges that are deleted during a phase do not participate in any augmenting path for the rest of the phase. We show this in two steps. First, we show that at the time of deletion of an edge $(u,v)$, there is no path in the admissible graph that starts from the edge $(u,v)$ and ends at a free vertex $a \in A_F$ (Lemma~\ref{lem:one-path}). In Lemma~\ref{lem:final}, we show that any such edge $(u,v)$ will not participate in any admissible alternating path to a free vertex of $A_F$ for the rest of the phase. 

We use DFS$(b,k)$ to denote the DFS initiated from $b$ in phase $k$. Let $P^b_u$ denote the path maintained by DFS$(b,k)$ when the vertex $u$ was added to the path. 

\begin{lemma}
\label{lem:final}
Consider some point during the second stage of phase $k$ where there is an edge $(u,v)$ that does not participate in any admissible alternating path to a vertex of $A_F$. Then, for the remainder of phase $k$, $(u,v)$ does not participate in any admissible alternating path to a vertex of $A_F$.
\end{lemma}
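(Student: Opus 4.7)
The plan is to proceed by induction on the sequence of events that occur in the second stage of phase $k$ after the given time $t$. Edge deletions only remove edges from the working graph $G'$ while leaving the true admissible graph (determined by slacks, which depend solely on the current matching and dual weights) unchanged, so they cannot turn a non-participating edge into a participating one. Hence only augmentation events can affect the invariant, and the inductive step reduces to the following claim: if $(u,v)$ does not participate in any admissible alternating path to $A_F$ immediately before an augmentation along a path $P$, then the same holds immediately afterwards.

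A crucial first observation is that $(u,v) \notin P$: at the moment $P$ is produced, $P$ itself is an admissible alternating path from a vertex of $B_F$ to some $a_0 \in A_F$, so the inductive hypothesis forces $(u,v) \notin P$. Now assume for contradiction that after \augment\ is applied along $P$ there is an admissible alternating path $Q$ (in the new admissible graph) through $(u,v)$ ending at some $a^* \in A_F$; note that $a^*$ is free both before and after the augmentation, since \augment\ only matches previously free vertices. Two bookkeeping facts would then be established by direct slack computations using $y'(a) = y(a)$ for $a \in A$ and $y'(b) = y(b) - \Delta(b)$ for $b \in B$, where $\Delta(b) \ge 0$ is the dual-weight decrease at $b$ induced by \augment: (i) every post-augmentation admissible edge is also admissible before augmentation in its earlier matching status; and (ii) if $b \in B$ has any post-augmentation admissible outgoing non-matching edge, then $\Delta(b) = 0$. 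Consequently every edge of $Q$ that does not lie on $P$ is pre-augmentation admissible with unchanged direction.

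Let $Q_v$ denote the suffix of $Q$ starting at the vertex $v$. If $Q_v$ uses no edge of $P$, then by (i) and (ii) it is already an admissible walk from $v$ to $a^*$ in the pre-augmentation admissible graph, and prepending $(u,v)$ (pre-admissible with unchanged direction because $(u,v) \notin P$) produces an admissible alternating path through $(u,v)$ to $A_F$, contradicting the hypothesis. Otherwise, let $e^* = (x,y)$ be the first edge of $Q_v$ that lies on $P$: the prefix of $Q_v$ up to $x$ is pre-admissible by the same argument, and because $e^*$ reverses direction between the two residual networks, $y$ precedes $x$ on the directed path $P$. I would then detour by concatenating the sub-path of $P$ from $x$ to $a_0$, which is pre-admissible since all of $P$ is. The resulting walk is a pre-admissible alternating walk from $v$ to $a_0 \in A_F$ containing $(u,v)$, again contradicting the hypothesis. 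The main obstacle is establishing fact (i): matching edges of $P$ that become non-matching after augmentation require careful bookkeeping, since their post-admissibility depends on both the edge weight and $\Delta$; once (i) and (ii) are in hand, the two-case detour argument closes the proof.
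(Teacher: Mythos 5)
Your proposal is correct and follows essentially the same strategy as the paper's proof: reduce to a single augmentation event (dual-weight decreases cannot create admissible edges), and derive a contradiction by detouring from the new admissible path onto the augmenting path $P$ to exhibit a pre-augmentation admissible path through $(u,v)$. The paper's version is terser — it asserts that $P_{y,z}$ must intersect $P_{a,b}$ at some vertex $x$ and that one can reroute from $x$ along $P_{a,b}$ to $a$ — while you make the same detour explicit by working with the suffix $Q_v$ from $v$ and the first edge of $P$ encountered, and by spelling out the slack bookkeeping (your facts (i) and (ii)). Note that fact (i) is actually less delicate than you fear: matching edges of $P$ are trivially admissible before augmentation in their earlier (matched) status, since matched edges always have slack zero; and a non-matching edge $(a,b)$ off $P$ has post-slack $s(a,b)+\Delta(b)$, so post-admissibility forces both $s(a,b)=0$ and $\Delta(b)=0$, giving (i) and (ii) at once. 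So the two writeups differ only in the level of detail, not in the underlying argument.
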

\begin{proof}
Assume for the sake of contradiction that at some later time during phase $k$, $(u,v)$ becomes part of an admissible path $P_{y,z}$ from a vertex $y$ to a vertex $z \in A_F$. Consider the first time this occurs for $(u,v)$. 
During the second stage, the dual weights of some vertices of $B$ may decrease just prior to augmentation; however, this does not create any new admissible edges. Therefore, $P_{y,z}$ must have become an admissible path due to augmentation along a path $P_{a,b}$ from some $b \in B_F$ to some $a \in A_F$.
Specifically, $P_{y,z}$ must intersect $P_{a,b}$ at some vertex $x$. Therefore, prior to augmenting along $P_{a,b}$, there was an admissible path from $y$ to $a$ via $x$. This contradicts the assumption that $(u,v)$ did not participate in any admissible path to a vertex of $A_F$ prior to this time.
\end{proof}

\begin{lemma}
\label{lem:dfsprop}
Consider the execution of DFS$(b,k)$ and the path $P_u^b$. Suppose the DFS$(b,k)$ marks an edge $(u,v)$ as visited. Let $P_{v}$ be an admissible alternating path from $v$ to any free vertex $a \in A_F$ in $G'$. Suppose $P_{v}$ and $P_u^b$ are vertex-disjoint. Then, DFS$(b,k)$ will find an augmenting path that includes the edge $(u,v)$.  
\end{lemma}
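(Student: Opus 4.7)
The plan is to prove Lemma~\ref{lem:dfsprop} by strong induction on the length $|P_v|$ of the admissible alternating path from $v$. Two facts from the algorithm description will be used throughout: edges of $G'$ are removed only at the termination of a DFS call (never mid-call), and dual weights do not change while a DFS is in progress, so every edge of $P_v$ remains present and admissible for the entire run of DFS$(b,k)$. In the base case $|P_v|=0$, we have $v = a \in A_F$; when DFS marks $(u,v)$ as visited, it detects $v \in A_F$ and invokes \augment\ on the current DFS stack $P_u^b + (u,v)$, which contains $(u,v)$, as required.

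For the inductive step, write $P_v = \langle v, v_1', v_2', \ldots, a \rangle$ with $|P_v| \ge 1$. Since $P_v$ is vertex-disjoint from $P_u^b$, we have $v \not\in P_u^b$; assuming $v \not\in A_F$, DFS pushes $v$ onto the stack, making the current stack $P_u^b + (u,v)$, and then explores unvisited out-edges of $v$. If DFS finds an augmenting path through any of these out-edges, the path equals the DFS stack at the moment of augmentation by the algorithm's description; since $v$ is still atop the stack at that moment, the path contains $(u,v)$. Otherwise, DFS must eventually try the edge $(v, v_1')$. At that time the stack is still $P_u^b + (u,v)$, and the suffix $\langle v_1', \ldots, a \rangle$ of $P_v$ is vertex-disjoint from this stack: its vertices lie in $P_v \setminus \{v\}$, hence outside $P_u^b$ by hypothesis and distinct from $v$ by simplicity of $P_v$. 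The inductive hypothesis applied to the edge $(v, v_1')$ and the shorter path $\langle v_1', \ldots, a \rangle$ yields an augmenting path containing $(v, v_1')$. Because the augmenting path is exactly the DFS stack at the moment \augment\ is called, and the stack is popped LIFO, $v_1'$ remaining on the stack forces everything pushed earlier---in particular the edge $(u,v)$---to remain on the stack as well, so the augmenting path contains $(u,v)$.

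The chief obstacle is verifying that the edge $(v, v_1')$ is actually unvisited when DFS reaches $v$ via $(u, v)$. Since the directed edge $(v, v_1')$ can be marked only while $v$ sits on the DFS stack, the worry is that $v$ was pushed and popped during an earlier portion of DFS$(b,k)$, which would have caused all of $v$'s out-edges to already be marked. I plan to rule this out with a subsidiary induction over the time of DFS events: any prior push of $v$ via some edge $(w^*, v)$ would itself satisfy the lemma's hypotheses with respect to the same path $P_v$ (whose admissibility and presence in $G'$ are invariant during a single DFS call), so by the inductive hypothesis at that earlier moment DFS would already have terminated with an augmenting path, contradicting the assumption that DFS is still running when $(u,v)$ is marked. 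Interleaving this temporal induction with the structural induction on $|P_v|$ is the most delicate step of the argument.
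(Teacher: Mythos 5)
Your structural induction on $|P_v|$ is a genuinely different route from the paper's argument, which instead picks the first edge of $P_v$ to be marked visited by DFS$(b,k)$ and derives a contradiction from three structural claims about that edge. Your base case and the bookkeeping in the inductive step (that the stack is exactly $P_u^b$ followed by $(u,v)$ whenever an unvisited out-edge of $v$ is about to be tried, and that $\langle v_1',\dots,a\rangle$ is disjoint from that stack) are fine. But the proof as written has a real gap precisely at the step you yourself flag as delicate, and the proposed temporal induction does not close it.

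The temporal induction claims that a prior push of $v$ via some edge $(w^*,v)$ ``would itself satisfy the lemma's hypotheses with respect to the same path $P_v$.'' That requires $P_v$ to be vertex-disjoint from the stack $P_{w^*}^b$ at that earlier moment, and you never establish this; it simply need not hold. Concretely, DFS$(b,k)$ could first reach some interior vertex $v_j'$ of $P_v$, wander off $P_v$ while $v_j'$ remains on the stack, arrive at $v$ via $(w^*,v)$ with $v_j'$ still an ancestor, fail to extend along $P_v$ because $v_j'$ is on the stack, backtrack and pop $v$, and only much later reach $u$ with a stack disjoint from $P_v$. In that scenario $(v,v_1')$ is already marked when $v$ is pushed via $(u,v)$, yet the lemma's hypotheses were never satisfied at the earlier push of $v$, so the temporal induction gives no contradiction. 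The inductive hypothesis cannot be invoked with the same $P_v$ at an earlier time unless you separately prove the disjointness, and you don't.

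The paper's formulation sidesteps this by looking at the first edge $(u',v')$ of $P_v$ that DFS$(b,k)$ ever marks visited. Minimality immediately gives that the entire suffix of $P_v$ beyond $v'$ is unvisited at that moment, and --- using the matching structure (a matched vertex in $A$ has a unique out-edge, a matched vertex in $B$ has a unique in-edge, and these coincide on the DFS stack and on $P_v$) --- one can conclude that the suffix is also vertex-disjoint from the stack $P_{u'}^b$, which is exactly the disjointness your temporal induction is missing. If you want to keep your structural-induction skeleton, you should replace the temporal induction with the ``first visited edge of $P_v$'' device: argue that the minimal such edge must be $(u,v)$ itself, by showing that any earlier-visited edge $(u',v')$ of $P_v$ would have been pushed onto the stack (not merely marked), with a suffix of $P_v$ that is both unvisited and disjoint from $P_{u'}^b$, and then apply your inductive hypothesis there with a strictly shorter path.
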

\begin{proof}
$P_v$ and $P^b_u$ are vertex-disjoint and so, $v$ is not on the path $P^b_u$. Therefore, DFS($b,k$) will add $(u,v)$ to the path and we get the path $P=P^b_v$.  We will show that all edges of $P_v$ are unvisited by DFS($b,k$), and so the DFS procedure, when continued from $v$, will discover an augmenting path.

We show, through a contradiction, that all edges of $P_v$ are not yet visited by DFS($b,k$). Consider, for the sake of contradiction, among all the edges of $P_v$, the edge $(u',v')$ that was marked visited first. We claim the following:
\begin{itemize}
    \item[(i)] {\it $(u',v')$ is visited before $(u,v)$}: This follows from the assumption that when $(u,v)$ was marked as visited, $(u',v')$ was already marked as visited by the DFS. 
    \item[(ii)] {\it $(u,v)$ is not a descendant of $(u',v')$ in the DFS}:
    If $(u',v')$ was an ancestor of $(u,v)$ in the DFS, then $P^b_u$ contains $(u',v')$. By definition, $P_v$ also contains $(u',v')$, which contradicts the assumption that $P^b_u$ and $P_v$ are disjoint paths.
    \item[(iii)] {\it When $(u',v')$ is marked visited, it will be added to the path by the DFS}: The only reason why $(u',v')$ is visited but not added is if $v'$ is already on the path $P^b_{u'}$. In that case, $P_v$ and $P^b_{u'}$ will share an edge that was visited before $(u',v')$ contradicting the assumption that $(u',v')$ was the earliest edge of $P_v$ to be marked visited.   
\end{itemize}
From (iii), when $(u',v')$ was visited, it was added to the path $P^b_{v'}$. Since $(u',v')$ was the edge on $P_v$ that was marked visited first by DFS($b,k$), all edges on the subpath from $v'$ to $a$ are unvisited. Therefore, the DFS($b,k$), when continued from $v'$, will not visit $(u,v)$ (from (ii)), will find an augmenting path, and terminate. From (i), $(u,v)$ will not be marked visited by DFS($b,k$) leading to a contradiction.  
\end{proof}

 \begin{lemma}
 \label{lem:one-path}
Consider a DFS initiated from some free vertex $b\in B_F$ in phase $k$. Let $M$ be the matching at the start of this DFS and $M'$ be the matching when the DFS terminates. Suppose the edge $(u,v)$ was deleted during DFS$(b,k)$. Then there is no admissible path starting with $(u,v)$  and ending at a free vertex $a \in A_F$ in $G_{M'}$.
 \end{lemma}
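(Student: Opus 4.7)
The plan is to argue by contradiction. Suppose $(u,v)$ is deleted during $\mathrm{DFS}(b,k)$, yet there is an admissible alternating path $P^{*}=(u,v)\cdot P_v^{*}$ from $u$ to some $a'\in A_F$ in $G_{M'}$. The overall strategy is to force $\mathrm{DFS}(b,k)$ to have produced an augmenting path containing $(u,v)$, which would place $(u,v)$ in an affected piece and hence prevent its deletion.

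First, using the contrapositive of Lemma~\ref{lem:final} together with induction on the number of DFSs already executed in phase $k$ (the inductive hypothesis being Lemma~\ref{lem:one-path} for all earlier DFSs), I would argue that at the start of $\mathrm{DFS}(b,k)$ there is, in the admissible graph of $G_M$, an admissible alternating path $P^{**}=(u,v)\cdot P_v^{**}$ from $u$ to some $a\in A_F$. Moreover $P_v^{**}$ uses no edge that was deleted by a prior DFS in the phase: any such edge would itself start an admissible path to $A_F$ (the subpath of $P_v^{**}$ from that edge onward), contradicting the induction combined with Lemma~\ref{lem:final}. So $P_v^{**}$ sits inside the working graph $G'$ that $\mathrm{DFS}(b,k)$ operates on.

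Let $P_u^{b}$ be the DFS's stack at the moment $(u,v)$ is visited. If $P_v^{**}$ is vertex-disjoint from $P_u^{b}$, Lemma~\ref{lem:dfsprop} directly gives an augmenting path $P$ found by $\mathrm{DFS}(b,k)$ with $(u,v)\in P$; this forces the piece $K_{(u,v)}$ to be affected, so $(u,v)$ is not deleted, a contradiction. Otherwise, let $w$ be the vertex of $P_v^{**}$ nearest to $v$ that also lies on $P_u^{b}$. The concatenated directed walk $C = P_u^{b}[w..u]\cdot(u,v)\cdot P_v^{**}[v..w]$ is closed, has all admissible edges, and alternates between matching and non-matching edges (alternation at $w$ follows from the bipartite parity of directed admissible paths reaching $w$). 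Hence $C$ is an admissible alternating cycle, and Lemma~\ref{lem:nocycle} forces $\dist(C)=0$, placing all of $C$ inside a single piece, necessarily $K_{(u,v)}$.

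The final step, and the main obstacle, is to convert the cycle $C\subseteq K_{(u,v)}$ into a contradiction. I would concatenate $P_u^{b}[b..w]$ with $P_v^{**}[w..a]$ to get an admissible alternating walk from $b$ to $a$ in $G'$; shortening it by excising admissible alternating closed subwalks (each of even length, so their removal preserves alternation) yields a simple admissible augmenting path from $b$ to $a$ in $G'$, so $\mathrm{DFS}(b,k)$ must find some augmenting path $P$. The subtle part is showing that $P$ must contain an edge of $K_{(u,v)}$: either $(u,v)$ itself ends up on $P$ (returning us to the first case), or the existence of the cycle $C$ together with another application of Lemma~\ref{lem:nocycle} forces any augmenting path extending a prefix of $P_u^{b}$ past the entry point into $K_{(u,v)}$ to use an edge of $K_{(u,v)}$ before reaching $A_F$. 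In either event $K_{(u,v)}$ is affected, contradicting the deletion of $(u,v)$ and completing the proof.
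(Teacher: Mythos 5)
Your overall framework---assume an admissible path from $(u,v)$ to $A_F$ in $G_{M'}$, push it back to $G_M$ via the contrapositive of Lemma~\ref{lem:final} and the phase induction, then split on whether $P_v^{**}$ and $P_u^b$ are vertex-disjoint and in the intersecting case form a cycle and invoke Lemma~\ref{lem:nocycle}---tracks the paper's argument in spirit, and your push-back step and cycle construction are correct. However, there are two genuine gaps.

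\textbf{Gap 1 (weight-$1$ edges on the augmenting path).} In your vertex-disjoint branch you conclude from Lemma~\ref{lem:dfsprop} that DFS$(b,k)$ finds an augmenting path containing $(u,v)$, and then write ``this forces the piece $K_{(u,v)}$ to be affected, so $(u,v)$ is not deleted.'' This is only valid when $\dist(u,v)=0$: a weight-$1$ edge lies in no piece at all (pieces are connected components of weight-$0$ edges), so it is \emph{not} protected by the affected-piece rule and \emph{is} deleted when it appears on $\tilde P$. The paper treats these edges separately (its case (a)): it shows that after augmentation along $\tilde P$, a weight-$1$ edge of $\tilde P$ either has slack $\geq 2$ (if it was a matching edge) or has its $B$-endpoint's dual decreased by $2$ so that every edge leaving that vertex has slack $\geq 2$ (if it was a non-matching edge). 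In either case $(u,v)$ cannot start an admissible path in $G_{M'}$, which is exactly what kills your hypothesis for this branch. Your structural argument about the DFS and affected pieces cannot substitute for this dual-weight calculation.

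\textbf{Gap 2 (the intersecting weight-$0$ case, your ``main obstacle'').} After establishing that the cycle $C$ has $\dist(C)=0$ and lies in $K_{(u,v)}$, you propose to concatenate and excise subwalks to get a simple augmenting path in $G'$ and then argue that the path the DFS actually finds ``must use an edge of $K_{(u,v)}$.'' This implication does not follow: the DFS may well find an augmenting path that never enters $K_{(u,v)}$, and the mere existence of the cycle $C$ plus Lemma~\ref{lem:nocycle} does not constrain which augmenting path the DFS returns. The paper's resolution here hinges on a different, extremal choice that your proposal lacks: among \emph{all} edges visited by DFS$(b,k)$ that lie in $K_{(u,v)}$, pick $(u',v')$ whose endpoint $v'$ has a shortest (fewest-edge) admissible path $P_{v'}$ to $a'$. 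If $P_{v'}$ and $P^b_{u'}$ are disjoint, Lemma~\ref{lem:dfsprop} makes $K_{(u,v)}$ affected; if they meet at $z$ with successor $z'$ on $P^b_{u'}$, then either $(z,z')\in K_{(u,v)}$ (contradicting minimality of $P_{v'}$) or $(z,z')\notin K_{(u,v)}$ (the resulting admissible cycle mixes pieces, hence has positive weight, contradicting Lemma~\ref{lem:nocycle}). Without this extremality device, the step you flag as an obstacle remains an obstacle.
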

\begin{proof}
At the start of phase $k$, $G'$ is initialized to the admissible graph. Inductively, we assume that all the edges discarded in phase $k$ prior to the execution of DFS($b,k$) do not participate in any augmenting path of admissible edges with respect to $M$. Therefore, any augmenting path of admissible edges in $G_M$ remains an augmenting path in $G'$. There are two possible outcomes for DFS($b,k$). Either, (i) the DFS terminates without finding an augmenting path, or (ii) the DFS terminates with an augmenting path $\tilde{P}$ and $M'=M\oplus \tilde{P}$.

In case (i), $M=M'$ and any edge $(u,v)$ visited by the DFS($b,k)$ is marked for deletion. For the sake of contradiction, let $(u,v)$  participate in an admissible path $P$ to a free vertex $a' \in A_F$.  Since $u$ is reachable from $b$ and $a'$ is reachable from $u$ in $G_{M}$, $a'$ is reachable from $b$. This contradicts the fact that DFS($b,k$) did not find an augmenting path. Therefore, no edge $(u,v)$ marked for deletion participates in an augmenting path with respect to $M$.

In case (ii), $M'=M\oplus \tilde{P}$. DFS($b,k$) marks two kinds of edges for deletion. 
\begin{itemize}
\item[(a)] Any edge $(u,v)$ on the augmenting path $\tilde{P}$ such that $\dist(u,v)=1$ is deleted, and,
\item[(b)] Any edge $(u,v)$ that is marked visited by DFS($b,k$), does not lie on $\tilde{P}$, and does not belong to any affected piece is deleted. 
\end{itemize}
In (a), there are two possibilities (1) $(u,v) \in \tilde{P} \cap M$ or (2) $(u,v)\in \tilde{P}\setminus M$. If $(u,v) \in M$ (case (a)(1)), then, after augmentation along $\tilde{P}$, $s(u,v)$ increases from $0$ to at least $2$, and $(u,v)$ is no longer admissible. Therefore, $(u,v)$ does not participate in any admissible alternating paths to a free vertex in $A_F$ with respect to $G_{M'}$. If $(u,v) \not\in M$ (case (a)(2)), then the \augment\ procedure reduces the dual weight of $u \in B$ by $2$. So, every edge going out of $u$ will have a slack of at least $2$. Therefore, $(u,v)$ cannot participate in any admissible path $P$ to a free vertex in $A_F$. This completes case (a).

For (b), we will show that $(u,v)$, even prior to augmentation along $\tilde{P}$, did not participate in any path of admissible edges from $v$ to any free vertex of $A_F$. For the sake of contradiction, let there be a path $P_v$ from $v$ to $a' \in A_F$. We claim that $P_v$ and $P^b_u$ are not vertex-disjoint. Otherwise, from Lemma~\ref{lem:dfsprop}, the path $\tilde{P}$ found by DFS($b,k$) includes $(u,v)$. However, by our assumption for case (b), $(u,v)$ does not lie on $\tilde{P}$. Therefore, we safely assume that $P_v$ intersects $P^b_u$. There are two cases:
\begin{itemize}
    \item {\it $\dist(u,v)=1$}: We will construct a cycle of admissible edges containing the edge $(u,v)$. Since $\dist(u,v)=1$, our construction will contradict Lemma~\ref{lem:nocycle}. Let $x$ be the first vertex common to both $P_v$ and $P^b_u$ as we walk from $v$ to $a'$ on $P_v$. To create the cycle, we traverse from $x$ to $u$ along the path $P^b_u$, followed by the edge $(u,v)$, followed by the path from $v$ to $x$ along $P_v$. All edges of this cycle are admissible including the edge $(u,v)$.
    \item {\it $\dist(u,v)=0$}: In this case, $(u,v)$ belongs to some piece $K_i$ that is not an affected piece. Among all edges visited by DFS($b,k$), consider the edge $(u',v')$ of $K_i$, the same piece as $(u,v)$, such that $v'$ has a path to the vertex $a'\in A_F$ with the fewest number of edges. Let $P_{v'}$ be this path. We claim that $P_{v'}$ and $P^b_{u'}$ are not vertex-disjoint. Otherwise, from Lemma~\ref{lem:dfsprop}, the path $\tilde{P}$ found by DFS($b,k$)  includes $(u',v')$ and $K_i$ would have been an affected piece. Therefore, we can safely assume that $P_{v'}$ intersects with $P^b_{u'}$. Let $z$ be the first intersection point with $P^b_{u'}$ as we walk from $v'$ to $a'$ and let $z'$ be the vertex that follows after $z$ in $P^b_{u'}$. There are two possibilities:
    \begin{itemize}
        \item {\it The edge $(z,z')\in K_i$:} In this case, $(z,z')$ is also marked visited by DFS($b,k$), and $z'$ has path to $a'$ with fewer number of edges than $v'$. This contradicts our assumption about $(u',v')$.
        \item {\it The edge $(z,z') \not\in K_i$:} In this case, consider the cycle obtained by walking from $z$ to $u'$ along the path $P^b_{u'}$ followed by the edge $(u',v')$ and the path from $v'$ to $z$ along $P_{v'}$. Since $(u',v') \in K_i$ and $(z,z') \not\in K_i$, the admissible cycle contains at least one edge of weight $1$. This contradicts Lemma~\ref{lem:nocycle}. 
    \end{itemize} 
    \end{itemize}
    This concludes case (b) which shows that $(u,v)$ did not participate in any augmenting paths with respect to $M$. From Lemma~\ref{lem:final}, it follows that $(u,v)$ does not participate in any augmenting path with respect to $G_{M'}$ as well.
 \end{proof}

 \begin{corollary}
 \label{cor:one-path}
 At the end of any phase, there is no augmenting path of admissible edges. 
 \end{corollary}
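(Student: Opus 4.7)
The plan is to argue by contradiction. Suppose at the end of some phase $k$ there remains an augmenting path $P=\langle b=v_1, v_2, \ldots, v_l = a\rangle$ of admissible edges with $b\in B_F$ and $a \in A_F$. I will show that the first edge $(b, v_2)$ of $P$ could not have survived phase $k$, yielding a contradiction.

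Since $b$ is still free at the end of phase $k$, it remained free throughout the phase. Thus DFS$(b, k)$ was initiated during the second stage and necessarily terminated without finding an augmenting path (otherwise $b$ would have been matched). My next step is to show $(b, v_2)$ was present in $G'$ at the moment DFS$(b, k)$ began. Being admissible at the end of the phase, $(b, v_2)$ was admissible at the start of the second stage as well, since (as observed in the proof of Lemma~\ref{lem:one-path}) the dual updates in the second stage do not create new admissible edges; hence $(b, v_2)$ belonged to $G'$ initially. Moreover, $(b, v_2)$ cannot have been deleted before DFS$(b, k)$ was initiated: because $b \in B$ is free, it has no matching edge, and therefore no edge directed into $b$ in the residual network apart from the one originating at $s$; consequently, no DFS launched from another free vertex of $B_F$ can ever visit an edge incident to $b$. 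Similarly, $(b, v_2)$ cannot have lain on a previously discovered augmenting path, since any such path would have started at $b$ and left $b$ matched.

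Because DFS$(b,k)$ fails to find an augmenting path, it must exhaust every outgoing edge of $b$ before terminating. In particular, $(b, v_2)$ is marked visited during DFS$(b, k)$ and, since the DFS is unsuccessful, is deleted at its conclusion. By Lemma~\ref{lem:one-path}, immediately after this deletion the edge $(b, v_2)$ does not participate in any admissible path from $v_2$ to a vertex of $A_F$ in the current residual network; by Lemma~\ref{lem:final}, this property persists for the remainder of phase $k$. This contradicts the existence of $P$, which begins with $(b, v_2)$ and ends at $a \in A_F$.

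The only nontrivial step is establishing that $(b, v_2) \in G'$ at the moment DFS$(b, k)$ is initiated; once this is secured, the two lemmas apply directly. I expect this to be the main obstacle, and its resolution relies on two observations: free vertices of $B_F$ are unreachable in the residual network except through $s$, isolating edges incident to $b$ from DFSs initiated at other free vertices, and the dual adjustments of the second stage can only destroy, never create, admissibility.
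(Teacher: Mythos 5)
Your proof is correct and matches the argument the paper leaves implicit: the corollary reduces to Lemmas~\ref{lem:one-path} and~\ref{lem:final} once one shows the first edge $(b,v_2)$ of the hypothetical augmenting path must survive in $G'$ until DFS$(b,k)$ begins, which then visits and deletes all of $b$'s outgoing edges since $b$ has no incoming residual edges and the DFS fails. One small slip worth noting: the observation that second-stage dual decreases create no new admissible edges appears in the proof of Lemma~\ref{lem:final} (not Lemma~\ref{lem:one-path}), and by itself it does not rule out admissibility changes arising from augmentation flipping edge directions — but since $b$ remains free throughout, $(b,v_2)$ never flips direction and its slack (determined by $y(b)$, $y(v_2)$, and $\dist(b,v_2)$, all invariant during stage two) is unchanged, so your conclusion that $(b,v_2)\in G'$ at the start of stage two stands.
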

 
\section{Minimum bottleneck matching}
\label{sec:bottleneck}
We are given two sets $A$ and $B$ of $n$ $d$-dimensional points. Consider a weighted and complete bipartite graph on points of $A$ and $B$. The weight of any edge $(a,b) \in A\times B$ is given by its Euclidean distance and denoted by $\|a-b\|$.  For any matching $M$ of $A$ and $B$ let its largest weight edge be its \emph{bottleneck edge}. In the \emph{minimum bottleneck matching} problem, we wish to compute a matching $M_\opt$ of $A$ and $B$ with the smallest weight bottleneck edge. We refer to this weight as the \emph{bottleneck distance} of $A$ and $B$ and denote it by $\beta^*$.  An \emph{$\eps$-approximate bottleneck matching} of $A$ and $B$ is any matching $M$  with a bottleneck edge weight of at most $(1+\eps)\beta^*$. 
We present an algorithm that takes as input $A,B$, and a value $\delta$  such that $\beta^*\le \delta \le  (1+\eps/3)\beta^*$, and produces an $\eps$-approximate  bottleneck matching. For simplicity in presentation, we describe our algorithm for the $2$-dimensional case when all points of $A$ and $B$ are in a bounding square $S$. The algorithm easily extends to any arbitrary fixed dimension $d$. For $2$-dimensional case, given a value $\delta$, our algorithm executes in $\BigOT(n^{4/3}/\eps^3)$ time. 

Although, the value of $\delta$ is not known to the algorithm, we can first find a value $\alpha$ that is guaranteed to be an $n$-approximation of the bottleneck distance~\cite[Lemma 2.2]{av_scg04} and then select $\BigO(\log n/\eps)$ values from the interval $[\alpha/n, \alpha]$ of the form $(1+\eps/3)^i\alpha/n$, for $0\le i \le \BigO(\log n /\eps)$.  We will then execute our algorithm for each of these $\BigO(\log n/\eps)$ selected values of $\delta$.  Our algorithm returns a maximum matching whose edges are of length at most $(1+\eps/3)\delta$ in $\BigO(n^{4/3}/\eps^3)$ time. At least one of the $\delta$ values chosen will be a $\beta^* \le \delta \le (1+\eps/3)\beta^*$. The matching returned by the algorithm for this value of $\delta$ will be perfect ($|M|=n$) and have a bottleneck edge of weight at most  $(1+\eps/3)^2\beta^* \le (1+\eps)\beta^*$ as desired.  Among all executions of our algorithm that return a perfect matching, we return a perfect matching with the smallest bottleneck edge weight. Therefore, the total time taken to compute the $\eps$-approximate bottleneck matching is $\BigOT(n^{4/3}/\eps^4)$. 

Given the value of $\delta$, the algorithm will construct a graph as follows: Let $\grid$ be a grid on the bounding square $S$. The side-length of every square in this grid is $\eps\delta/(6\sqrt{2})$. For any cell $\xi$ in the grid $\grid$, let $N(\xi)$ denote the subset of all cells $\xi'$ of $\grid$ such that the minimum distance between $\xi$ and $\xi'$ is at most $\delta$. By the use of a simple packing argument, it can be shown that $|N(\xi)| = \BigO(1/\eps^2)$. 

For any point $v \in A\cup B$, let $\xi_v$ be the cell of grid $\grid$ that contains $v$. We say that a cell $\xi$ is \emph{active} if $(A\cup B)\cap \xi \neq \emptyset$. Let $A_\xi$ and $B_\xi$ denote the points of $A$ and $B$ in the cell $\xi$. We construct a bipartite graph $\mathcal{G}(A\cup B, \mathcal{E})$ on the points in $A\cup B$ as follows: For any pair of points $(a,b) \in A\times B$, we add an edge in the graph if $\xi_b \in N(\xi_a)$. Note that every edge $(a,b)$ with $\|a-b\| \le \delta$ will be included in $\mathcal{G}$.  Since $\delta$ is at least the bottleneck distance, $\mathcal{G}$ will have a perfect matching.  The maximum distance between any cell $\xi$ and a cell in $N(\xi)$ is $(1+\eps/3)\delta$. Therefore, no edge in $\mathcal{G}$ will have a length greater than $(1+\eps/3)\delta$. This implies that any perfect matching in $\mathcal{G}$ will  also be an $\eps$-approximate bottleneck matching. We use our algorithm for maximum matching to compute this perfect matching in $\mathcal{G}$. Note, that $\mathcal{G}$ can have $\Omega(n^2)$ edges. For the sake of efficiency, our algorithm executes on a compact representation of $\mathcal{G}$ that is described later. Next, we assign weights of $0$ and $1$ to the edges of $\mathcal{G}$ so that the any maximum matching in $\mathcal{G}$ has a small weight $w$.

For a parameter\footnote{Assume $r$ to be a perfect square.} $r > 1$, we will carefully select another grid $\grid'$ on the bounding square $S$, each cell of which has a side-length of $\sqrt{r}(\eps\delta/(6\sqrt{2}))$ and encloses $\sqrt{r}\times \sqrt{r}$ cells of $\grid$. For any cell $\xi$ of the grid $\grid$, let $\Box_{\xi}$ be the cell in $\grid'$ that contains $\xi$.  Any cell $\xi$ of $\grid$ is a \emph{boundary cell} with respect to $\grid'$ if there is a cell $\xi' \in N(\xi)$ such that $\Box_{\xi'}\neq \Box_{\xi}$. Equivalently, if the minimum distance from $\xi$ to $\Box_{\xi}$ is at most $\delta$, then $\xi$ is a boundary cell.  For any boundary cell $\xi$ of $\grid$ with respect to grid $\grid'$, we refer to all points of $A_\xi$ and $B_\xi$ that lie in $\xi$ as boundary points. All other points of $A$ and $B$ are referred to as internal points.  We carefully construct this grid $\grid'$ such that the total number of boundary points is $\BigO(n/\eps\sqrt{r})$ as follows: First, we will generate the vertical lines for  $\grid'$, and then we will generate the horizontal lines using a similar construction.  Consider the vertical line $y_{ij}$ to be the line $x=i(\eps\delta)/(6\sqrt{2})+j\sqrt{r}(\eps\delta/(6\sqrt{2}))$. For any fixed integer $i$ in $[1, \sqrt{r}]$, consider the set of vertical lines $\mathbb{Y}_i=\{y_{ij}\mid y_{ij} \textnormal{ intersects the bounding square } S\}$. We label all cells $\xi$ of $\grid$ as boundary cells with respect to $\mathbb{Y}_i$ if the distance from $\xi$ to some vertical line in $\mathbb{Y}_i$ is at most $\delta$. We designate the points inside the boundary cells as boundary vertices with respect to $\mathbb{Y}_i$. For any given $i$, let $A_i$ and $B_i$ be the boundary vertices of $A$ and $B$ with respect to the lines in $\mathbb{Y}_i$. We select an integer $\kappa = \arg\min_{1\le i\le \sqrt{r}} |A_i\cup B_i|$ and use $\mathbb{Y}_{\kappa}$ as the vertical lines for our grid $\grid'$. We use a symmetric construction for the horizontal lines.

\begin{lemma}
\label{lem:boundarysize}
Let $A_i$ and $B_i$ be the boundary points with respect to the vertical lines $\mathbb{Y}_i$. Let $\kappa = \arg\min_{1\le i\le \sqrt{r}} |A_i\cup B_i|$. Then, $|A_\kappa\cup B_\kappa| = \BigO(n/(\eps\sqrt{r}))$.
\end{lemma}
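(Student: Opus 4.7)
The plan is to prove this by an averaging (pigeonhole) argument over the $\sqrt{r}$ choices of $i$. I will show that each individual point of $A \cup B$ is a boundary point for only a small number of choices of $i$, and then divide by $\sqrt{r}$ to bound the minimum.

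First I will translate the geometric quantities into column indices. Each cell of $\grid$ has side length $\eps\delta/(6\sqrt{2})$, so distance $\delta$ corresponds to $\delta / (\eps\delta/(6\sqrt{2})) = 6\sqrt{2}/\eps$ cell widths; thus a cell $\xi$ is a boundary cell with respect to $\mathbb{Y}_i$ exactly when some line of $\mathbb{Y}_i$ lies within $\lceil 6\sqrt{2}/\eps\rceil = \BigO(1/\eps)$ columns of $\xi$. By the definition of $y_{ij}$, the lines of $\mathbb{Y}_i$ sit precisely at the column indices congruent to $i \pmod{\sqrt{r}}$ (multiplied by the cell side length). Hence, as $i$ ranges over $1,\ldots,\sqrt{r}$, the families $\mathbb{Y}_1,\ldots,\mathbb{Y}_{\sqrt{r}}$ partition the set of all vertical grid lines of $\grid$ into $\sqrt{r}$ residue classes modulo $\sqrt{r}$.

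Next, I will fix an arbitrary point $p \in A\cup B$ lying in the cell of column index $c$, and count how many $i \in [1,\sqrt{r}]$ make $p$ a boundary point with respect to $\mathbb{Y}_i$. By the previous paragraph, $p$ is boundary with respect to $\mathbb{Y}_i$ iff the residue class $i \bmod \sqrt{r}$ contains some integer within distance $\BigO(1/\eps)$ of $c$. There are only $\BigO(1/\eps)$ integers in that window around $c$, and each integer belongs to exactly one residue class, so at most $\BigO(1/\eps)$ values of $i$ make $p$ boundary. (I may assume $\sqrt{r}$ is large enough that the window is contained in one period; otherwise the trivial bound $|A_\kappa \cup B_\kappa| \le 2n$ already gives the claim.)

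Summing this per-point bound over all $2n$ points of $A \cup B$,
\begin{equation*}
\sum_{i=1}^{\sqrt{r}} |A_i \cup B_i| \;=\; \sum_{p \in A\cup B} |\{i : p \text{ is boundary w.r.t. } \mathbb{Y}_i\}| \;=\; \BigO(n/\eps).
\end{equation*}
Since $\kappa$ achieves the minimum of $|A_i\cup B_i|$ over $\sqrt{r}$ choices, averaging yields
\begin{equation*}
|A_\kappa \cup B_\kappa| \;\le\; \frac{1}{\sqrt{r}}\sum_{i=1}^{\sqrt{r}} |A_i\cup B_i| \;=\; \BigO\!\left(\frac{n}{\eps\sqrt{r}}\right),
\end{equation*}
as claimed. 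The main subtlety is the first step: carefully converting the Euclidean $\delta$-neighborhood condition into a statement about integer column indices modulo $\sqrt{r}$, so that the partition structure of the $\mathbb{Y}_i$ can be exploited. Once that translation is done, the averaging step is immediate.
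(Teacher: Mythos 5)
Your proof is correct and follows essentially the same approach as the paper: an averaging argument showing that each point (equivalently, each cell of $\grid$) is a boundary point for only $\BigO(1/\eps)$ of the $\sqrt{r}$ shift families $\mathbb{Y}_i$, then dividing the double count by $\sqrt{r}$. You add a slightly more explicit justification via column indices and residue classes modulo $\sqrt{r}$, and you handle the small-$r$ edge case, but the core argument is the same as in the paper.
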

\begin{proof}
For any fixed cell $\xi$ in $\grid$, of the $\sqrt{r}$ values of $i$, there are $\BigO(1/\eps)$ values for which $\mathbb{Y}_i$ has a vertical line at a distance at most $\delta$ from $\xi$. Therefore, each cell $\xi$ will be a boundary cell in only $\BigO(1/\eps)$ shifts out of $\sqrt{r}$ shifts. So, $A_\xi$ and $B_\xi$ will be counted in $A_i \cup B_i$ for $\BigO(1/\eps)$ different values of $i$. Therefore, if we take the average over choices of $i$, we get
\begin{equation*}
    \min_{1\le i\le \sqrt{r}}|A_i\cup B_i| \le \frac{1}{\sqrt{r}}\sum_{i=1}^{\sqrt{r}} |A_i\cup B_i| \le \BigO(n/(\eps\sqrt{r})).
\end{equation*}
\end{proof}
Using a similar construction, we guarantee that the boundary points with respect to the horizontal lines of $\grid'$ is also at most $\BigO(n/(\eps\sqrt{r}))$. 
\begin{corollary}

\label{cor:smallweight}
The grid $\grid'$ that we construct has $\BigO(n/(\eps\sqrt{r}))$ many boundary points.
\end{corollary}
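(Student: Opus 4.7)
The plan is to observe that the corollary follows immediately by combining Lemma~\ref{lem:boundarysize} with its horizontal analogue. First I would unpack the definition: a cell $\xi$ of $\grid$ is a boundary cell with respect to $\grid'$ exactly when some cell $\xi' \in N(\xi)$ lies in a different cell of $\grid'$ than $\xi$ does. Equivalently, the minimum distance from $\xi$ to the boundary of $\Box_\xi$ is at most $\delta$. Since the boundary of $\Box_\xi$ is built from vertical lines of $\mathbb{Y}_\kappa$ and horizontal lines of the analogously chosen horizontal family (call it $\mathbb{X}_{\kappa'}$), being a boundary cell means $\xi$ is within distance $\delta$ of some line in $\mathbb{Y}_\kappa \cup \mathbb{X}_{\kappa'}$.

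Next I would split the set of boundary cells (and hence boundary points) into two classes: those that are boundary cells with respect to $\mathbb{Y}_\kappa$ and those that are boundary cells with respect to $\mathbb{X}_{\kappa'}$. Any boundary point of $\grid'$ lies in at least one of these two classes, so the total number of boundary points is at most the sum of the sizes of $A_\kappa \cup B_\kappa$ (for the vertical shift) and the symmetric quantity for the horizontal shift.

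Finally, I would invoke Lemma~\ref{lem:boundarysize} to bound $|A_\kappa \cup B_\kappa| = \BigO(n/(\eps\sqrt{r}))$ and its horizontal analogue, noted by the authors just before the corollary, to bound the horizontal count by the same quantity. Summing the two gives the stated $\BigO(n/(\eps\sqrt{r}))$ bound.

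There is no real obstacle here; the only thing to be careful about is making sure the union bound is legitimate, i.e., that every $\grid'$-boundary cell is indeed caught by at least one of the two one-dimensional constructions. This is immediate from the fact that the boundary of $\Box_\xi$ is the union of vertical and horizontal segments taken from $\mathbb{Y}_\kappa$ and $\mathbb{X}_{\kappa'}$, so proximity within $\delta$ to that boundary forces proximity within $\delta$ to one of the two line families.
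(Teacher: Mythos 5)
Your proposal is correct and matches the paper's intent exactly: the paper leaves the corollary unproved because it follows directly from Lemma~\ref{lem:boundarysize} applied to the vertical family together with the symmetric bound for the horizontal family stated in the preceding sentence, and you simply spell out the union bound. No substantive difference.
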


For any two cells $\xi$ and $\xi' \in N(\xi)$ of the grid $\grid$, suppose $\Box_{\xi}\neq \Box_{\xi'}$. Then the weights of all edges of $A_{\xi}\times B_{\xi'}$ and of $B_\xi\times A_{\xi'}$ are set to $1$. All other edges have a weight of $0$. We do not make an explicit weight assignment as it is expensive to do so. Instead, we can always derive the weight of an edge when we access it. Only boundary points will have edges of weight $1$ incident on them. From Corollary~\ref{cor:smallweight}, it follows that any maximum matching will have a weight of $w = \BigO(n/(\eps\sqrt{r}))$.  

The edges of every piece in $\mathcal{G}$ have endpoints that are completely inside a cell of $\grid'$. Note, however, that there is no straight-forward bound on the number of points and edges of $\mathcal{G}$ inside each piece. Moreover, the number of edges in $\mathcal{G}$ can be $\Theta(n^2)$. Consider any feasible matching $M, y(\cdot)$ in $\mathcal{G}$. Let $\mathcal{G}_M$ be the residual network. In order to obtain a running time of $\BigOT(n^{4/3}/\eps^3)$, we use the grid $\grid$ to construct a compact residual network $\mathcal{CG}_M$ for any feasible matching $M,y(\cdot)$ and use this compact graph to implement our algorithm. The following lemma assists us in constructing the compressed residual network.

\begin{lemma}
\label{lem:differ1}
Consider any feasible matching $M, y(\cdot)$ maintained by our algorithm on $\mathcal{G}$ and  any active cell $\xi$ in the grid $\grid$. The dual weight of any two points $a,a' \in A_{\xi}$ can differ by at most $2$. Similarly, the dual weights of any two points $b,b' \in B_{\xi}$ can differ by at most $2$.
\end{lemma}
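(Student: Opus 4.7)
The plan is to exploit a very simple structural property of $\mathcal{G}$: any two points that lie in the same grid cell of $\grid$ have the \emph{same} set of neighbors in $\mathcal{G}$ and the \emph{same} edge weights to each such neighbor. Indeed, $(a,b)$ is an edge of $\mathcal{G}$ iff $\xi_b\in N(\xi_a)$, which depends only on the cell $\xi_a$; and its weight (0 or 1) depends only on whether $\Box_{\xi_a}=\Box_{\xi_b}$, which again depends only on cells. So for any $a,a'\in A_\xi$ and any $b\in B$, the edge $(a,b)$ is in $\mathcal{G}$ iff $(a',b)$ is, and $\dist(a,b)=\dist(a',b)$. This ``interchangeability'' lets me transfer any feasibility constraint involving $a$ to one involving $a'$, and the $0/1$ bound on edge weights then yields the factor of $2$.

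I will then do a short case analysis on $a,a'\in A_\xi$. If both are free, invariant (I1) gives $y(a)=y(a')=0$. If $a'$ is free and $a$ is matched to some $b$, then $y(a)=y(b)+\dist(a,b)$ by~\eqref{eq:feas2}; since $(a',b)\notin M$ and is an edge of $\mathcal{G}$ with $\dist(a',b)=\dist(a,b)$, feasibility~\eqref{eq:feas1} gives $y(b)\le y(a')+\dist(a',b)=\dist(a,b)$, hence $y(a)\le 2\dist(a,b)\le 2=y(a')+2$; the reverse direction $y(a')\le y(a)+2$ is immediate as $y(a')=0\le y(a)$ (the preprocessing begins with $y\equiv 0$ and dual weights of $A$-vertices only move as the algorithm progresses, so $y(a)\ge 0$, but even if one wished to avoid this, the bound follows from $y(a')=0$ and $y(a)\ge -2$ via the same transfer). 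If both $a,a'$ are matched, say to $b$ and $b'$ respectively, then applying the argument above to edge $(a',b)$ yields $y(a)-y(a')\le 2\dist(a,b)\le 2$, and symmetrically $(a,b')$ yields $y(a')-y(a)\le 2\dist(a',b')\le 2$.

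The argument for $b,b'\in B_\xi$ is completely analogous. If both are free, (I1) gives $y(b)=y(b')=y_{\max}$. If both are matched (to $a$ and $a'$ respectively), applying~\eqref{eq:feas1} to the ``crossed'' edges $(a,b')$ and $(a',b)$---which are in $\mathcal{G}$ with the same weights as $(a,b)$ and $(a',b')$ because $\xi_b=\xi_{b'}$---gives $|y(b)-y(b')|\le 2$ exactly as before. If $b'$ is free and $b$ is matched to $a$, then $y(b)\le y_{\max}=y(b')$ directly from the definition of $y_{\max}$, while $y(b')-y(b)\le 2$ follows from~\eqref{eq:feas1} applied to $(a,b')$, using $\dist(a,b')=\dist(a,b)$.

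The main (minor) obstacle is bookkeeping: one must remember that edges between $A_\xi$ and $B$ are symmetric in the sense described in the first paragraph, and that in each case the ``swapped'' edge $(a',b)$ or $(a,b')$ is necessarily \emph{not} in $M$ (since the true match of $b$ is $a$, not $a'$), which is precisely what lets us invoke the inequality~\eqref{eq:feas1} rather than the equality~\eqref{eq:feas2}. Once this is set up the $0/1$ bound on $\dist$ delivers the factor of $2$ immediately.
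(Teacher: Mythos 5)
Your proof is correct and takes essentially the same approach as the paper's: both exploit that two points in the same cell $\xi$ of $\grid$ are interchangeable (identical incident edges in $\mathcal{G}$ with identical $0/1$ weights), combine~\eqref{eq:feas2} on the matched edge with~\eqref{eq:feas1} on the ``swapped'' edge, and get the bound of $2$ from the $0/1$ weight cap. The paper streamlines this by assuming toward a contradiction that $y(b)\ge y(b')+3$, noting this forces $b'$ to be matched, and deriving the violation directly, whereas you carry out the full free/matched case split; the underlying argument is the same.
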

\begin{proof}
We present our proof for two points $b,b' \in B_{\xi}$. A similar argument will extend for $a,a'\in A_{\xi}$. For the sake of contradiction, let $y(b) \ge y(b')+3$. $b'$ must be matched since $y(b') < y(b) \le y_{\max}$. Let $m(b') \in A$ be the match of $b'$  in $M$. From~\eqref{eq:feas2}, $y(m(b'))-y(b')=\dist(b',m(b'))$. Since both $b$ and $b'$ are in $\xi$, the distance $\dist(b,m(b'))=\dist(b',m(b'))$. So, $y(b)-y(m(b'))\ge (y(b') +3)-y(m(b'))  = 3-\dist(b,m(b'))$. This violates~\eqref{eq:feas1} leading to a contradiction. 
\end{proof}
For any feasible matching and any cell $\xi$ of $\grid$, we divide points of $A_{\xi}$ and $B_{\xi}$ based on their dual weight into at most three clusters. Let $A_{\xi}^1, A_{\xi}^2$ and $A_{\xi}^3$ be the three clusters of points in $A_{\xi}$ and let $B_{\xi}^1, B_{\xi}^2$ and $B_{\xi}^3$ be the three clusters of points in $B_{\xi}$. We assume that points with the largest dual weights are in $A_{\xi}^1$ (resp. $B_{\xi}^1$), the points with the second largest dual weights are in $A_{\xi}^2$ (resp. $B_{\xi}^2$), and the points with the smallest dual weights are in $A_{\xi}^3$ (resp.  $B_{\xi}^3$). 

\subparagraph*{Compact residual network:} Given a feasible matching $M$, we construct a compact residual network $\mathcal{CG}_M$ to assist in the fast implementation of our algorithm. This vertex set $\mathcal{A}\cup \mathcal{B}$ for the compact residual network is constructed as follows. First we describe the vertex set $\mathcal{A}$. For every active cell $\xi$ in $\grid$, we add a vertex $a_{\xi}^1$ (resp. $a_{\xi}^2, a_{\xi}^3$) to represent the set $A_{\xi}^1$ (resp. $A_{\xi}^2, A_{\xi}^3$) provided $A_{\xi}^1\neq \emptyset$ (resp. $A_{\xi}^2\neq \emptyset, A_{\xi}^3\neq \emptyset$). We designate $a_{\xi}^1$ (resp. $a_{\xi}^2, a_{\xi}^3$) as a \emph{free} vertex if $A_{\xi}^1\cap A_F \neq \emptyset$ (resp.  $A_{\xi}^2\cap A_F\neq \emptyset, A_{\xi}^3\cap A_F\neq \emptyset$). Similarly, we construct a vertex set $\mathcal{B}$ by adding a vertex $b_{\xi}^1$ (resp. $b_{\xi}^2, b_{\xi}^3$) to represent the set $B_{\xi}^1$ (resp. $B_{\xi}^2, B_{\xi}^3$) provided $B_{\xi}^1\neq \emptyset$ (resp. $B_{\xi}^2\neq \emptyset, B_{\xi}^3\neq \emptyset$). We designate $b_{\xi}^1$ (resp. $b_{\xi}^2, b_{\xi}^3$) as a \emph{free} vertex if $B_{\xi}^1\cap B_F \neq \emptyset$ (resp.  $B_{\xi}^2\cap B_F\neq \emptyset, B_{\xi}^3\cap B_F\neq \emptyset$).  Each active cell $\xi$ of the grid $\grid$ therefore has at most six points. Each point in $\mathcal{A}\cup \mathcal{B}$ will  inherit the dual weights of the points in its cluster; for any vertex $a_\xi^1 \in \mathcal{A}$ (resp. $a_\xi^2 \in \mathcal{A}, a_\xi^3 \in \mathcal{A}$), let $y(a_{\xi}^1)$(resp. $y(a_{\xi}^2),y(a_{\xi}^3)$) be the dual weight of all points in $A_{\xi}^1$ (resp. $A_{\xi}^2, A_{\xi}^3$). We define $y(b_{\xi}^1)$, $y(b_{\xi}^2)$, and $y(b_{\xi}^3)$ as dual weights of points in $B_{\xi}^1, B_{\xi}^2$, and $B_{\xi}^3$ respectively. Since there are at most $n$ active cells,  $|\mathcal{A}\cup \mathcal{B}|=\BigO(n)$.  

Next, we create the edge set for the compact residual network $\mathcal{CG}$. For any active cell $\xi$ in the grid $\grid$ and for any cell $\xi'\in N(\xi)$,
\begin{itemize}
    \item We add a directed edge from $a_{\xi}^i$ to $b_{\xi'}^j$, for $i,j \in \{1,2,3\}$ if there is an edge $(a,b) \in (A_{\xi}^i\times B_{\xi'}^j)\cap M$. We define the weight of $(a_{\xi}^i,b_{\xi'}^j)$ to be $\dist(a,b)$. We also define the slack $s(a_{\xi}^i,b_{\xi'}^j)$ to be $\dist(a_{\xi}^i,b_{\xi'}^j) -y(a_{\xi}^i)+y(b_{\xi'}^j)$ which is equal to $ s(a_{\xi}^i,b_{\xi'}^j)=\dist(a,b)-y(a)+y(b) = s(a,b) = 0$.
    \item We add a directed edge from $b_{\xi}^i$ to $a_{\xi'}^j$, for $i,j \in \{1,2,3\}$ if $(B_{\xi}^i \times A_{\xi'}^j)\setminus M \neq \emptyset$. Note that the weight and slack of every directed edge in $B_{\xi}^i \times A_{\xi'}^j$ are identical. We define the weight of $(b_{\xi}^i,a_{\xi'}^j)$ to be $\dist(a,b)$ for any $(a,b) \in A_{\xi'}^j \times B_{\xi}^i$. We also define the slack $s(b_{\xi}^i,a_{\xi'}^j) = \dist(b_{\xi}^i,a_{\xi'}^j) -y(b_{\xi}^i)+y(a_{\xi'}^j)$ which is equal to the slack $s(a,b)$.
\end{itemize}

For each vertex in $\mathcal{A}\cup \mathcal{B}$, we added at most two edges to every cell $\xi' \in N(\xi)$. Since $N(\xi) = \BigO(1/\eps^2)$, the total number of edges in $\mathcal{E}$ is $\BigO(n/\eps^2)$. For a cell $\Box$ in $\grid'$, let $\mathcal{A}_{\Box}$ be the points of $\mathcal{A}$ generated by cells of $\grid$ that are contained inside the cell $\Box$. A piece $K_\Box$ has $\mathcal{A}_{\Box} \cup \mathcal{B}_{\Box}$ as the vertex set and $\mathcal{E}_{\Box}=((\mathcal{A}_{\Box}\times\mathcal{B}_{\Box})\cup (\mathcal{B}_{\Box}\times \mathcal{A}_{\Box})\cap \mathcal{E})$ as the edge set.  Note that the number of vertices in any piece $K_{\Box}$ is $\BigO(r)$ and the number of edges in $K_{\Box}$ is $\BigO(r/\eps^2)$. Every edge $(u,v)$ of any piece $K_{\Box}$ has a weight $\dist(u,v)=0$ and every edge $(u,v)$ with a weight of zero belongs to some piece of $\mathcal{CG}$.

The following lemma shows that the compact graph $\mathcal{CG}$ preserves all minimum slack paths in $\mathcal{G}_M$.  
\begin{lemma}
\label{lem:compactprop}
For any directed path $\mathcal{P}$ in the compact residual network $\mathcal{CG}$, there is a directed path $P$ in the residual network such that $\sum_{(u,v)\in P}s(u,v) = \sum_{(u,v)\in \mathcal{P}} s(u,v)$.
For any directed path $P$ in $\mathcal{G}_M$, there is a directed path $\mathcal{P}$ in the compact residual network such that
$\sum_{(u,v) \in P} s(u,v) \ge \sum_{(u,v)\in\mathcal{P}}s(u,v).$
\end{lemma}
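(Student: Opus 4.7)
My plan is to prove the two directions separately, leveraging two structural features of how $\mathcal{CG}$ is built: (i) the weight of any edge $(a,b)\in A_{\xi}\times B_{\xi'}$ in $\mathcal{G}$ depends only on the cell pair $(\xi,\xi')$ (it is $1$ if $\Box_{\xi}\neq\Box_{\xi'}$ and $0$ otherwise); and (ii) every real vertex inside a cluster $A_{\xi}^i$ (resp.\ $B_{\xi}^i$) shares the same dual weight $y(a_{\xi}^i)$ (resp.\ $y(b_{\xi}^i)$). Together these imply that the slack of every real edge equals the slack of the compact edge joining the clusters containing its endpoints, so each direction reduces to constructing an appropriate path while slack sums transfer edge by edge.

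For the forward direction, given $\mathcal{P}=u_0\to u_1\to\cdots\to u_m$ in $\mathcal{CG}$, I would produce $P=v_0\to v_1\to\cdots\to v_m$ by a commitment-then-fill strategy. In a simple alternating path of $\mathcal{CG}$, matched and non-matched compact edges alternate, so each $u_t$ is adjacent to at most one matched compact edge. First, for each matched compact edge $u_{t-1}\to u_t$, the definition of $\mathcal{CG}$ guarantees some $(a,b)\in M$ with $a$ in the cluster of $u_{t-1}$ and $b$ in the cluster of $u_t$, and I would commit $v_{t-1}=a$ and $v_t=b$. These commitments do not conflict because no compact vertex is incident to two matched edges. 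Second, for any $v_t$ not yet committed (which can happen only when $u_t$ is an endpoint of $\mathcal{P}$ whose incident edge is non-matched), I would choose any element of the corresponding cluster. Because the underlying cells are $N$-adjacent, each pair $(v_{t-1},v_t)$ is already an edge of $\mathcal{G}$; and because the $M$-partner of $v_{t-1}$ (when it exists) is pinned to the cluster of the preceding matched commitment, which by simplicity of $\mathcal{P}$ differs from the cluster of $v_t$, the non-matched edges truly satisfy $(v_{t-1},v_t)\notin M$. Since distinct compact vertices represent disjoint subsets of $A\cup B$, the resulting $P$ is simple, and slack equality follows from (i)--(ii) edge by edge.

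For the reverse direction, I would start from any directed path $P=v_0\to\cdots\to v_m$ in $\mathcal{G}_M$ and let $\tilde{u}_t$ denote the compact vertex representing the cluster containing $v_t$. By construction of $\mathcal{CG}$, each real edge $(v_{t-1},v_t)$ yields a compact edge $(\tilde{u}_{t-1},\tilde{u}_t)$ with identical slack, so the induced sequence is a directed walk in $\mathcal{CG}$ whose total slack equals $\sum_{(u,v)\in P}s(u,v)$. If the walk is already simple, take $\mathcal{P}$ to be it; otherwise, iteratively excise a cycle whenever a vertex is repeated. Every such cycle has even length (the walk alternates between $\mathcal{A}$ and $\mathcal{B}$), so the bipartite alternation is preserved after each excision; and every compact edge has non-negative slack by feasibility (Lemma~\ref{lem:feas}), so each excision only decreases the total. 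The resulting simple path $\mathcal{P}$ therefore satisfies $\sum_{(u,v)\in\mathcal{P}}s(u,v)\le\sum_{(u,v)\in P}s(u,v)$.

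I expect the main obstacle to be the forward construction: one must argue that commitments made when realizing successive matched edges can always be bridged by a valid non-matched real edge. The crucial ingredients are that each compact matched edge already certifies a realizing pair in $M$, and that the uniqueness of $M$-partners combined with the simplicity of $\mathcal{P}$ prevents two matched commitments from landing in the same cluster, which is the only way a would-be non-matched edge $(v_{t-1},v_t)$ could accidentally fall inside $M$.
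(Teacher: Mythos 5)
The paper does not provide a proof of Lemma~\ref{lem:compactprop} (it is stated and then used without justification), so I am assessing your argument on its own merits rather than against the authors' proof.

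Your plan is sound, and the two structural observations you lead with are exactly the crux: the weight of a real edge depends only on the pair of $\grid$-cells containing its endpoints, and all points in a cluster carry the same dual weight, so the slack of a real edge equals the slack of the compact edge joining the clusters of its endpoints. The reverse direction is correct and complete: projecting each $v_t$ to its cluster gives a directed walk in $\mathcal{CG}$ with the same slack sum, any cycle in that walk has even length (so alternation is preserved after excision) and non-negative slack by feasibility, and repeatedly removing cycles yields a simple directed path of no larger slack.

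In the forward direction there is one small gap in the fill step. Filling an uncommitted endpoint $v_t$ with ``any element of the corresponding cluster'' is justified whenever the other endpoint $v_{t-1}$ (or $v_{t+1}$) of that non-matched compact edge is committed, because then that neighbor's $M$-partner is pinned to a cluster two steps away, which is distinct by simplicity of $\mathcal{P}$. But when $\mathcal{P}$ is a single non-matched compact edge $u_0 \to u_1$, both $v_0$ and $v_1$ are uncommitted, and independently choosing arbitrary cluster elements can accidentally produce a pair that lies in $M$. The fix is immediate — the compact edge exists precisely because $(B_{\xi}^i \times A_{\xi'}^j)\setminus M$ is nonempty, so pick that witness pair — but your closing remarks only rule out clashes between two committed endpoints and do not cover the doubly-uncommitted case. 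Also, the sentence about the $M$-partner of $v_{t-1}$ being pinned ``to the cluster of the preceding matched commitment'' should be stated symmetrically (pin $v_t$'s partner to the cluster of the following matched commitment) to cover the case where $v_{t-1}$ rather than $v_t$ is the uncommitted endpoint. These are bookkeeping omissions, not structural flaws; once patched, the argument is correct.
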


\subparagraph*{Preprocessing step:} At the start, $M= \emptyset$ and all dual weights are $0$. Consider any cell $\Box$ of the grid $\grid'$ and any cell $\xi$ of $\grid$ that is contained inside $\Box$. Suppose we have a point $a_{\xi}^1$. We assign a demand $d_{a_\xi^1}=|A_{\xi}^1|=|A_{\xi}|$ to $a_{\xi}^1$. Similarly, suppose we have a point $b_{\xi}^1$, we assign a supply $ s_{b_\xi^1}=|B_{\xi}^1|=|B_{\xi}|$. The preprocessing step reduces to finding a maximum matching of supplies to demand. This is an instance of the unweighted transportation problem which can be solved using the algorithm of~\cite{sidford} in $\BigOT(|\mathcal{E}_{\Box}|\sqrt{|\mathcal{A}_{\Box}\cup \mathcal{B}_{\Box}|}) = \BigOT(|\mathcal{E}_{\Box}|\sqrt{r})$.  Every edge of $\mathcal{E}$ participates in at most one piece. Therefore, the total time taken for preprocessing across all pieces is $\BigOT(|\mathcal{E}|\sqrt{r})=\BigOT(n\sqrt{r}/\eps^2)$. We can trivially convert the matching of supplies to demand to a matching in $\mathcal{G}$.

\subparagraph*{Efficient implementation of the second step:} Recollect that the second step of the algorithm consists of phases. Each phase has two stages. In the first stage, we execute Dijkstra's algorithm in $\BigO(n\log n/\eps^2)$ time by using the compact residual network $\mathcal{CG}$. After adjusting the dual weight of nodes in the compact graph, in the second stage, we iteratively compute augmenting paths of admissible edges by conducting a DFS from each vertex. Our implemnetation of DFS has the following differences from the one described in Section~\ref{sec:graphmatch}.
\begin{itemize}
    \item Recollect that each free vertex $v \in \mathcal{B}$ may represent a cluster that has $t>0$ free vertices. We will execute DFS from $v$ exactly $t$ times, once for each of the free vertices of $\mathcal{B}$.
    \item During the execution of any DFS, unlike the algorithm described in Section~\ref{sec:graphmatch}, the DFS will mark an edge as visited only when it backtracks from the edge. Due to this change, all edges on the path maintained by the DFS are marked as unvisited. Therefore,   unlike the algorithm from Section~\ref{sec:graphmatch}, this algorithm will not discard weight $1$ edges of an augmenting path after augmentation. From Lemma~\ref{lem:numberofphases}, the total number of these edges is $\BigO(w\log w)$. 
\end{itemize}

\subparagraph*{Efficiency:} The first stage is an execution of Dijkstra's algorithm which takes $\BigO(|\mathcal{E}|+|\mathcal{V}|\log |\mathcal{V}|)=\BigO(n\log n/\eps^2)$ time. Suppose there are $\lambda$ phases; then the cumulative time taken across all phases for the first stage is $\BigOT(\lambda n/\eps^2)$. In the second stage of the algorithm,  in each phase, every edge is discarded once it is visited by a DFS, unless it is in an affected piece or it is an edge of weight $1$ on an augmenting path. Since each affected piece has $\BigO(r/\eps^2)$ edges, and since there are $\BigO(w \log w)$ edges of weight $1$ on the computed augmenting paths, the total time taken by all the DFS searches across all the $\lambda$ phases is bounded by $\BigOT(n\lambda/\eps^2 + r/\eps^2\sum_{i=1}^t |\K_i| +w\log w)$. In Lemma~\ref{lem:numberofphases}, we bound $\lambda$ by $\sqrt{w}$ and  $\sum_{i=1}^t |\K_i|$ by $\BigO(w\log w)$. Therefore, the total time taken by the algorithm including the time taken by preprocessing step is $\BigOT((n/\eps^2)(\sqrt{r}+\sqrt{w}+\frac{wr}{n}))$. Setting $r=n^{2/3}$, we get $w=\BigO(n/(\eps\sqrt{r})) = \BigO(n^{2/3}/\eps)$, and the total running time of our algorithm is $\BigOT(n^{4/3}/\eps^3)$. To obtain the bottleneck matching, we execute this algorithm on $\BigO(\log (n/\eps))$ guesses; therefore, the total time taken to compute an $\eps$-approximate bottleneck matching is $\BigOT(n^{4/3}/\eps^4)$.
For $d > 2$, we choose $r=n^{\frac{d}{2d-1}}$ and $w=\BigO(n/(d\eps r^{1/d}))$. With these values, the execution time of our algorithm is $\frac{1}{\eps^{\BigO(d)}}n^{1+\frac{d-1}{2d-1}}\mathrm{poly}\log n$.  



\ignore{
\appendix
\section{Ommitted Proofs}
\subsection{Proof of Lemma \ref{lem:netcostlength}}
\begin{lemma*}
For any augmenting path $P$ with respect to a matching $M$,  let $M' = M \oplus P$. Then
$\phi(P) = \dist(M') - \dist(M) + \dist(P\setminus M)$.
\end{lemma*}
\begin{proof}
By definition of net-cost,
\begin{align*}
\phi(P) &=  \sum_{(a,b) \in P\setminus M} \Phi(a,b) - \sum_{(a,b) \in P\cap M} \Phi(a,b) = 2\dist(P\setminus M) - \dist(P\cap M)\\
&= \dist(M') -\dist(M) + \dist(P\setminus M).
\end{align*}
\end{proof}
\subsection{Proof of Lemma \ref{lem:netcostslack1}}
\begin{lemma*}
 For any $1$-feasible matching $M, y(\cdot)$ maintained by the algorithm, let $y_{\max}$ be the dual weight of every vertex of $B_F$. For any augmenting path $P$ with respect to $M$, 
 \begin{equation*}
     \phi(P)= y_{\max}+\sum_{(a,b) \in P}s(a,b).
 \end{equation*}
\end{lemma*}
\begin{proof}
 The net-cost of $P$ is
\begin{align*}
\phi(P) &=  \sum_{(a,b) \in P\setminus M} \Phi(a,b) - \sum_{(a,b) \in P\cap M} \Phi(a,b) \\
&= \sum_{(a,b) \in P\setminus M} (y(a)+y(b) + s(a,b)) - \sum_{(a,b) \in P\cap M} (y(a) + y(b)).
\end{align*}
Since every vertex on $P$ except for $u$ and $v$ participates in one edge of $P\cap M$ and one edge of $P \setminus M$, we can write the above equation as
\begin{equation*}
    \phi(P) = y(u) + y(v) + \sum_{(a,b)\in P \setminus M} s(a,b) = y(u) + y(v) + \sum_{(a,b)\in P} s(a,b).
\end{equation*}

The last equality follows from the fact that edges of $P\cap M$ satisfy~\eqref{eq:feas2} and have a slack of zero. From (I1), we get that $y(u)=y_{\max}$ and $y(v)=0$, which gives,
 \begin{equation*}
     \phi(P)= y_{\max}+\sum_{(a,b) \in P}s(a,b).
 \end{equation*}
\end{proof}
\subsection{Proof of Lemma \ref{lem:matchedge}}
 \begin{lemma*}
 For any edge $(a,b) \in M$, let $\ell_a$ and $\ell_b$ be the distances returned by Dijkstra's algorithm during the first stage of phase $k$, then $\ell_a=\ell_b$.
 \end{lemma*}
 \begin{proof}
 The only edge directed towards $b$ is an edge from its match $a$. Therefore, some shortest path from $s$ to $b$  in the augmented residual network should pass through $a$. Since the slack on any edge of $M$ is $0$, the cost $\ell_b=\ell_a+s(a,b)= \ell_a$. 
 \end{proof}
\subsection{Proof of Lemma \ref{lem:feas}}
\begin{lemma*}
Any matching $M$ and dual weights $y(\cdot)$ maintained during the execution of the algorithm are $1$-feasible. 
\end{lemma*}
\begin{proof}
We begin by showing that the dual weight modifications in the first stage of phase $k$ will not violate dual feasibility conditions~\eqref{eq:feas1} and~\eqref{eq:feas2}. Let $\tilde{y}(\cdot)$ denote the dual weights after the execution of the first stage of the algorithm. Consider any edge $(u,v)$ directed from $u$ to $v$. 
There are four possibilities:

 If both $\ell_u$ and $\ell_v$ are greater than or equal to $\ell$, then $y(u)$ and $y(v)$ remain unchanged and the edge remains feasible. 
 
 If both $\ell_u$ and $\ell_v$ are less than $\ell$, suppose $(u,v) \in M$, then from Lemma~\ref{lem:matchedge}, $\ell_u=\ell_v$. Then, $\tilde{y}(u) = y(u)-\ell+\ell_u$, $\tilde{y}(v)=y(v)+\ell-\ell_v$, and $\tilde{y}(u)+\tilde{y}(v)= y(u)+y(v) +\ell_u - \ell_v = \Phi(u,v)$ implying $(u,v)$ satisfies~\eqref{eq:feas2}.
 
 If $\ell_u$ and $\ell_v$ are less than $\ell$ and $(u,v)\not\in M$, then $u \in B$ and $v \in A$. By definition, $y(u) + y(v) + s(u,v)= \Phi(u,v)$. By the properties of shortest paths, for any edge $(u,v)$, $\ell_v -\ell_u \le s(u,v)$.  The dual weight of $u$ is updated to $y(u)+\ell-\ell_u$ and dual weight of $v$ is updated to $y(v)-\ell+\ell_v$. Therefore, the sum of the updated dual weights $\tilde{y}(u)+\tilde{y}(v)$ is $(y(u) + \ell - \ell_u)+ (y(v) - \ell +\ell_v) = y(u) + y(v) + \ell_v-\ell_u \le y(u)+ y(v)+s(u,v) = \Phi(u,v)$. Therefore, $(u,v)$ satisfies~\eqref{eq:feas1}.

If $\ell_u < \ell$ and $\ell_v \ge \ell$. From Lemma~\ref{lem:matchedge}, $(u,v)\not\in M$ and so, $u \in B$ and $v \in A$. From the shortest path property, for any edge $(u,v)$, $\ell_v -\ell_u \le s(u,v)$. Therefore,
\begin{equation*}
    \tilde{y}(v)+ \tilde{y}(u) =y(v)+y(u)+\ell-\ell_u \le y(u)+y(v)+\ell_v -\ell_u \le y(u)+y(v)+s(u,v) = \Phi(u,v),
\end{equation*}
implying $(u,v)$ satisfies~\eqref{eq:feas1}.

If $\ell_u \ge \ell$ and $\ell_v < \ell$. From Lemma~\ref{lem:matchedge}, $(u,v)\not\in M$ and so, $u \in B$ and $v \in A$. Since $\ell_v < \ell$,  we have,
\begin{equation*}
    \tilde{y}(v)+ \tilde{y}(u) =y(v)+y(u)-\ell+\ell_v < y(u)+y(v) \le \Phi(u,v),
\end{equation*}
implying $(u,v)$ satisfies~\eqref{eq:feas1}.

In the second stage of the algorithm, when an augmenting path $P$ is found, the dual weights of vertices of $B$ on $P$ decrease by $1$ and the adjusted costs of edges on $P$ change. Decreasing the dual weight of any vertex $u \in B$ on $P$ only increases the slack of non-matching edges incoming to $u$. Any edge that was in $M$ before augmentation will exit $M$ and have its adjusted cost increase by 1, which only increases its slack. Any edge $(u,v)$ with $u \in B$ and $v \in A$ that was not in $M$ prior to augmentation enters $M$ and has its adjusted cost decrease by 1. However, the dual weight $y(u)$ also decreases by 1, meaning $s(u,v)$ remains 0. So, neither~\eqref{eq:feas1} nor~\eqref{eq:feas2} are violated.
\end{proof}
\subsection{Proof of Lemma \ref{lem:firststage}}
\begin{lemma*}
After the first stage of each phase, there is an augmenting path consisting of admissible edges.
\end{lemma*}
 \begin{proof}
  Let $a \in A_F$ be a free vertex whose shortest point distance from $s$ in the augmented residual network is $\ell$, i.e., $\ell_a=\ell$. Let $P$ be the shortest path from $s$ to $a$ and let $P_a$ be the path $P$ with $s$ removed from it. Note that $P_a$ is an augmenting path. We will show that after the dual updates in the first stage, every edge of $P_a$ is admissible.
For any edge $(u,v) \in P_a$, if $(u,v) \in M$, then $u \in A$, $v \in B$, and from Lemma~\ref{lem:matchedge}, $\ell_u=\ell_v$. Then the updated dual weights are $\tilde{y}(u) = y(u)-\ell+\ell_u$ and $\tilde{y}(v) = y(v)+\ell-\ell_v$.   Therefore, $\tilde{y}(u)+\tilde{y}(v)=y(u)+y(v) +\ell_u-\ell_v= \Phi(u,v)$.
For any edge $(u,v) \in P_a$, if $(u,v) \not\in M$, then $u \in B$ and $v\in A$. From the optimal substructure property of shortest paths, for any edge $(u,v) \in P_a$, directed from $u$ to $v$,  $\ell_v-\ell_u=s(u,v)$. Therefore, the sum of new dual weights is
\begin{equation*}
    \tilde{y}(u)+\tilde{y}(v)= y(u)+\ell-\ell_u +y(v) -\ell+\ell_v = y(u)+y(v)-\ell_u+\ell_v = y(u)+y(v)+s(u,v)=\Phi(u,v),
\end{equation*}
implying that $(u,v)$ is admissible.
 \end{proof}
 
\subsection{Proof of Lemma \ref{lem:nocycle}}
\begin{lemma*}
For any $1$-feasible matching $M, y(\cdot)$ maintained by our algorithm, there is no alternating cycle $C$ with $\dist(C) > 0$ in the admissible graph.
\end{lemma*}
\begin{proof}
After the preprocessing step, every dual weight is $0$, so every edge with a weight of $1$ is not in the admissible graph. Therefore, there is not an alternating cycle $C$ with $\dist(C) > 0$ in the admissible graph. Assume inductively that the admissible graph after $k-1$ phases does not contain any such cycle $C$. Let $C = \langle c_1,c_2,\ldots, c_t, c_{1}\rangle$ be a directed cycle during the first stage of phase $k$. $C$ has at least one that is not admissible at the beginning of the first stage by the inductive assumption. If the $\ell_{c_i}$ value for every vertex $c_i$ on $C$ was the same, then the dual weights of every vertex of $A\cap C$ will reduce and those of $B\cap C$ will increase by the same amount. Therefore, the slacks on the edges of the cycle do not change implying $C$ is still not in the admissible graph. Otherwise, we can assume that at least two consecutive vertices  $\ell_{c_{i-1}}$ and $\ell_{c_i}$ of $C$ have $\ell_{c_{i-1}} > \ell_{c_i}$. By Lemma~\ref{lem:matchedge}, $c_{i-1}$ and $c_i$ are not matched to each other. Therefore, $c_{i-1} \in B$ and $c_i \in A$. Assume $\ell_{c_{i-1}}< \ell$ and $\ell_{c_i}< \ell$. (A similar argument extends to the two other cases, i.e., (i) $\ell_{c_{i-1}} > \ell_{c_i}\ge \ell$ and (ii) $\ell_{c_{i-1}} \ge \ell, \ell_{c_i} <\ell$ ). The slack $s(c_{i-1},c_i)$  after the first stage is,
\begin{align*}
\Phi(c_{i-1},c_i) - (y(c_{i-1})+\ell -\ell_{c_{i-1}}) - (y(c_i) -\ell+\ell_{c_i}) &=\\ (\Phi(c_{i-1},c_i)-y(c_{i-1})-y(c_i))+\ell_{c_{i-1}}-\ell_{c_i} &> 0.
\end{align*}
The last inequality follows from the fact that $(\Phi(c_{i-1},c_i)-y(c_{i-1})-y(c_i)) \ge 0$, and $\ell_{c_{i-1}} > \ell_{c_i}$, implying that the edge $(c_{i-1},c_i)$ has a positive slack and is not in the admissible graph after the first stage of phase $k$. The second stage of the algorithm does not create any new admissible edges. Therefore, the claim continues to hold after the second stage as well.
\end{proof}
\subsection{Proof of Lemma \ref{lem:final}}
\begin{lemma*}
Let $P$ be an augmenting path computed by our algorithm with respect to a matching $M$ in phase $k$. Let $(u,v)$ be an edge directed from $u$ to $v$ in $G_{M}$, such that there is no augmenting path in $G_M$ that includes the edge $(u,v)$. Then, until the end of the phase $k$, there is no augmenting path of admissible edges that includes the edge $(u,v)$.
\end{lemma*}
\begin{proof}
Let $M' = M\oplus P$. We show that $(u,v)$ does not participate in any augmenting path in $G_{M'}$. Applying the same argument iteratively for all augmenting paths of phase $k$ computed after $P$, we can conclude that $(u,v)$ does not participate in any augmenting path of admissible edges at the end of phase $k$.

By our assumption, $(u,v) \not\in P$. For the sake of contradiction, suppose there is an augmenting path  $P'$ of admissible edges in $G_{M'}$  containing the edge $(u,v)$. The second stage of the algorithm does not create any new admissible edges. Therefore, all edges of $P'$ are also admissible with respect to $G_M$.   The symmetric difference $P\oplus P'$ contains two vertex-disjoint augmenting paths $P_1$ and $P_2$ of admissible edges in $G_M$. Since $(u,v)\not\in P$ and $(u,v) \in P'$, we have $(u,v) \in P\oplus P'$. This implies that $(u,v)$ participates in an augmenting path of admissible edges (either $P_1$ or $P_2$) with respect to $G_M$, leading to a contradiction. 
\end{proof}
\subsection{Proof of Lemma \ref{lem:dfsprop}}
\begin{lemma*}
Consider the execution of DFS($b,k$) and the path $P_u^b$. Suppose the DFS($b,k$) marks an edge $(u,v)$ as visited. Let $P_{v}$ be an alternating path from $v$ to any free vertex $a \in A_F$ in $G'$. Suppose $P_{v}$ and $P_u^b$ are vertex-disjoint. Then, DFS($b,k$) will find an augmenting path that includes the edge $(u,v)$.  
\end{lemma*}
\begin{proof}
$P_v$ and $P^b_u$ are vertex-disjoint and so, $v$ is not on the path $P^b_u$. Therefore, DFS($b,k$) will add $(u,v)$ to the path and we get the path $P^b_v$.  We will show that all edges of $P_v$ are unvisited by DFS($b,k$) and so, the DFS procedure when continued from $v$ will discover an augmenting path.

We show, through a contradiction, that all edges of $P_v$ are not yet visited by DFS($b,k$). Consider, for the sake of contradiction, among all the edges of $P_v$, the first edge $(u',v')$ that was marked visited. By assumption, $(u',v')$ was visited before $(u,v)$. Also, $(u,v)$ is not a descendent of $(u',v')$ in the DFS search since otherwise $P_v$ and $P^b_u$ will not be vertex-disjoint. Therefore, we claim
\begin{itemize}
    \item[(i)] {\it $(u',v')$ is visited before $(u,v)$}: This follows from the assumption that when $(u,v)$ was marked as visited, $(u',v')$ was already marked as visited by the DFS. 
    \item[(ii)] {\it $(u,v)$ is not a descendant of $(u',v')$ in the DFS}:
    If $(u',v')$ was an ancestor of $(u,v)$ in the DFS, then $P^b_u$ contains $(u',v')$. By definition, $P_v$ also contains $(u',v')$, which contradicts the assumption that $P^b_u$ and $P_v$ are disjoint paths.
    \item[(iii)] {\it When $(u',v')$ is marked visited, it will be added to the path by the DFS}: The only reason why $(u',v')$ is visited but not added is if $v'$ is already on the path $P^b_{u'}$. In that case, $P_v$ and $P^b_{u'}$ will share an edge that was visited before $(u',v')$ contradicting the assumption that $(u',v')$ was the earliest edge of $P_v$ to be marked visited.   
\end{itemize}
From (iii), when $(u',v')$ was visited, it was added to the path $P^b_{v'}$. Since $(u',v')$ was the first edge on $P_v$ to be marked visited by DFS($b,k$), all edges on the subpath from $v'$ to $a$ are unvisited. Therefore, the DFS($b,k$) when continued from $v'$ will not visit $(u,v)$ (from (ii)), will find an augmenting path, and terminate. From (i), $(u,v)$ will not be marked visited by DFS($b,k$) leading to a contradiction.  
\end{proof}
\subsection{Proof of Lemma \ref{lem:one-path}}
 \begin{lemma*}
Consider a DFS initiated from some free vertex $b\in B_F$ in phase $k$. Let $M$ be the matching at the start of this DFS and $M'$ be the matching when the DFS terminates. Then any edge $(u,v)$ that was deleted DFS($b,k$) will not participate in any augmenting path of admissible edges in $G_{M'}$.
 \end{lemma*}
 \begin{proof}
At the start of phase $k$, $G'$ is initialized to the admissible graph. Inductively, we assume that all the edges discarded in phase $k$ prior to the execution of DFS($b,k$) do not participate in any augmenting path of admissible edges with respect to $M$. Therefore, any augmenting path of admissible edges in $G_M$ is remains an augmenting path in $G'$. There are two possible outcomes for DFS($b,k$). Either, (i) the DFS terminates without finding an augmenting path, or (ii) the DFS terminates with an augmenting path $\tilde{P}$ and $M'=M\oplus \tilde{P}$.

In case (i), $M=M'$ and any edge $(u,v)$ visited by the DFS($b,k)$ is marked for deletion. For the sake of contradiction, let $(u,v)$  participate in an augmenting path $P$ of admissible edges from $b'$ to $a' \in A_F$.  Since $u$ is reachable from $b$ and $a'$ is reachable from $u$ in $G_{M}$, $a'$ is reachable from $b$. This contradicts the fact that DFS($b,k$) did not find an augmenting path. Therefore, every edge $(u,v)$ marked for deletion does not participate in an augmenting path with respect to $M$.

In case (ii), $M'=M\oplus \tilde{P}$. DFS($b,k$) marks two kinds of edges for deletion. 
\begin{itemize}
\item[(a)] Any edge $(u,v)$ on the augmenting path $\tilde{P}$ such that $\dist(u,v)=1$ is deleted, and,
\item[(b)] Any edge $(u,v)$ that is marked visited by DFS($b,k$), does not lie on $\tilde{P}$, and does not belong to any affected piece is deleted. 
\end{itemize}
In (a), there are two possibilities (1) $(u,v) \in M\cap \tilde{P}$ or (2) $(u,v)\in \tilde{P}\setminus M$. If $(u,v) \in M$ (case (a)(1)), then, after augmentation along $\tilde{P}$, $\Phi(u,v)$ increases from $1$ to $2$, and $(u,v)$ is no longer admissible. Therefore, $(u,v)$ does not participate in any augmenting paths with respect to $G_{M'}$. If $(u,v) \not\in M$ (case (a)(2)), then the \augment\ procedure reduces the dual weight of $u \in B$ by $1$. So, every edge entering $u$ will have a slack of at least $1$. Therefore, $(u,v)$ cannot participate in any augmenting path $P$ consisting of admissible edges. This completes case (a).

In (b), we will show that $(u,v)$, even prior to augmentation along $\tilde{P}$, did not have an alternating path of admissible edges from $v$ to any free vertex  in $A_F$ in the residual network $G_M$. For the sake of contradiction, let there be a path $P_v$ from $v$ to $a' \in A_F$. We claim that $P_v$ and $P^b_u$ are not vertex-disjoint. Otherwise, from Lemma~\ref{lem:dfsprop}, the path $\tilde{P}$ found by DFS($b,k$) includes $(u,v)$. However, by our assumption for case (b), $(u,v)$ does not lie on $\tilde{P}$. Therefore, we safely assume that $P_v$ intersects $P^b_u$.
\begin{itemize}
    \item {\it $\dist(u,v)=1$}: We will construct a cycle of admissible edges containing the edge $(u,v)$. Since $\dist(u,v)=1$, our construction will contradict Lemma~\ref{lem:nocycle}. Let $x$ be the first vertex common to both $P_v$ and $P^b_u$ as we walk from $v$ to $a'$ on $P_v$. To create the cycle, we traverse from $x$ to $u$ along the path $P^b_u$, followed by the edge $(u,v)$, followed by the path from $v$ to $x$ along $P_v$. All edges of this cycle are admissible including the edge $(u,v)$.
    \item {\it $\dist(u,v)=0$}: In this case, $(u,v)$ belongs to some piece $K_i$ that is not an affected piece. Among all edges visited by DFS($b,k$), consider the edge $(u',v')$ of $K_i$, the same piece as $(u,v)$, such that $v'$ has a path to the vertex $a'\in A_F$ with the fewest number of edges. Let $P_{v'}$ be this path. We claim that $P_{v'}$ and $P^b_{u'}$ are not vertex-disjoint. Otherwise, from Lemma~\ref{lem:dfsprop}, the path $\tilde{P}$ found by DFS($b,k$)  includes $(u',v')$ and $K_i$ would have been an affected piece. Therefore, we can safely assume that $P_{v'}$ intersects with $P^b_{u'}$. Let $z$ be the first intersection point with $P^b_{u'}$ as we walk from $v'$ to $a'$ and let $z'$ be the vertex that follows after $z$ in $P^b_{u'}$. There are two possibilities:
    \begin{itemize}
        \item {\it The edge $(z,z')\in K_i$:} In this case, $(z,z')$ is also marked visited by DFS($b,k$) and $z'$ has path to $a'$ with fewer number of edges than $v'$ which contradicts our assumption about $(u',v')$.
        \item {\it The edge $(z,z') \not\in K_i$:} In this case, consider the cycle obtained by walking from $z$ to $u'$ along the path $P^b_{u'}$ followed by the edge $(u',v')$ and the path from $v'$ to $z$ along $P_{v'}$. Since $(u',v') \in K_i$ and $(z,z') \not\in K_i$, the cycle contains at least one edge of weight $1$. This contradicts Lemma~\ref{lem:nocycle}. 
    \end{itemize} 
    \end{itemize}
    This concludes case (b) which shows that $(u,v)$ did not participate in any augmenting paths with respect to $M$. From Lemma~\ref{lem:final}, it follows that $(u,v)$ does not participate in any augmenting path with respect to $G_{M'}$ as well.
 \end{proof}
\subsection{Proof of Lemma \ref{lem:boundarysize}}
\begin{lemma*}
Let $A_i$ and $B_i$ be the boundary points with respect to vertical lines $\mathbb{X}_i$. Let $\kappa = \arg\min_{0\le i\le \sqrt{r}} |A_i\cup B_i|$. Then, $|A_\kappa\cup B_\kappa| = \BigO(n/(\eps\sqrt{r}))$.
\end{lemma*}
\begin{proof}
For any fixed cell $\xi$ in $\grid$, of the $\sqrt{r}$ values of $i$, there are $\BigO(1/\eps)$ values of $i$ for which $\mathbb{Y}_i$ has at least one vertical line at a distance less than $\delta$ from $\xi$. Therefore, each cell $\xi$ will be a boundary cell in only $\BigO(1/\eps)$ shifts out of $\sqrt{r}$ shifts. So, $A_\xi$ and $B_\xi$ will be counted in $A_i \cup B_i$ for $\BigO(1/\eps)$ different values of $i$. Therefore, if we take the average over choices of $i$, we get
\begin{equation*}
    \min_{0\le i\le \sqrt{r}}|A_i\cup B_i| \le \frac{1}{\sqrt{r}}\sum_{i=1}^{\sqrt{r}} |A_i\cup B_i| \le \BigO(n/(\sqrt{r}\eps)).
\end{equation*}
\end{proof}
\subsection{Proof of Lemma \ref{lem:differ1}}
\begin{lemma*}
For any $1$-feasible matching $M, y(\cdot)$ maintained by our algorithm in $\mathcal{G}$ and for any active cell $\xi$ in the grid $\grid$. The dual weight of any two points $a,a' \in A_{\xi}$ can differ by at most $1$. Similarly, the dual weights of any two points $b,b' \in B_{\xi}$ can differ by at most $1$.
\end{lemma*}
\begin{proof}
We present our proof for two points $b,b' \in B_{\xi}$. A similar argument will extend for $a,a'\in A_{\xi}$. For the sake of contradiction, let $y(b) \ge y(b')+2$. $b'$ must be matched since $y(b') < y(b) \le y_{\max}$. Let $m(b') \in A$ be the match of $b'$ in $M$. From~\eqref{eq:feas2}, $y(b')+y(m(b'))=\dist(b',m(b'))$. Since both $b$ and $b'$ are in $\xi$, the distance $\dist(b,m(b'))=\dist(b',m(b))$. So, $y(b)+y(m(b'))\ge (y(b') +2) +y(m(b')) = \dist(b',m(b'))+2 = \dist(b,m(b'))+2$. This violates~\eqref{eq:feas1} leading to a contradiction. 
\end{proof}

\section{Matching in Graphs with Small Vertex Separators}
\newcommand{\cut}{\textsc{GenCut}}
Let $M_{\opt}$ be any fixed maximum matching on a graph $G(V,E)$, $|V|=n, |E|=m$. For any vertex-induced subgraph $G'(V', E')$ of $G$, a \textit{cut} $(X,Y)$ of $G'$ is a partition of $V'$ into two vertex sets $X$ and $Y$. The cut-set $C(X,Y)$ is the set of all edges in $E'$ with one vertex in $X$ and the other vertex in $Y$. We say that $(X,Y)$ is a \emph{balanced cut} if for some constant $\alpha \geq 2$, $|X| \geq |V'|/\alpha$ and $|Y| \geq |V'|/\alpha$. We say that $(X,Y)$ is a $\delta$-\emph{good balanced cut} if, for some constant $\beta$,  $|C(X,Y) \cap M_{\opt}| \leq \beta |V'|^\delta$. 
 
Suppose there exists an algorithm \cut$(G')$ that accepts a subgraph $G'$ of $G$, with $n'$ vertices and $m'$ edges, and outputs a $\delta$-good balanced cut of $G$ in $t(n',m')$ time\footnote{We assume that $t(n,m) = \Omega(n+m).$}. We describe a procedure that, given any parameter $r < n$, computes in $\BigO(t(n,m)\log{n})$ time an assignment of $0$ and $1$ weights on edges of $G$ such that,
\begin{itemize}
    \item Each piece has at most $mr/n$ edges and at most $r$ vertices. 
    \item $M_{\opt}$ has weight $w=\BigO(n/r^{1-\delta})$.
\end{itemize}
Our procedure splits $G$ into subgraphs by recursively applying the \cut\ procedure. Intially, \cut\ is invoked on $G$. When \cut\ is applied to any subgraph $G'(V',E')$ of $G$, a cut $(X,Y)$ is returned. Let $G_X(V_X, E_X)$ (resp. $G_Y(V_Y, E_Y)$) be the vertex induced subgraph of $X$ (resp. $Y$) on $G$. The procedure assigns a weight of 1 to every edge in the cut-set $C(X,Y)$ and a weight of 0 to every edge of $E_X$ (resp. $E_Y$). If $|E_X|$ (resp. $|E_Y|$) is at most $mr/n$ and $|V_X|$ (resp. $|V_Y|$) is at most $r$, add $G_X$ (resp. $G_Y$) as a piece. Otherwise, recursively apply \cut\ on $G_X$ (resp. $G_Y$). The result is a set of pieces each having $\BigO(mr/n)$ edges and $O(r)$ vertices.

We define \textit{level} $i$ as the set of subgraphs sent as input to the \cut\ procedure with number of vertices in the range $[2^i, 2^{i+1})$. The total number of non-empty levels is $\BigO(\log{n})$. Furthermore, the total number vertices (resp. edges) across all subgraphs of any level is $\BigO(n)$ (resp. $\BigO(m)$). Therefore, it is easy to see that this procedure takes $\BigO(t(n,m)\log{n})$ time. We next bound the optimal matching cost, $w$ in two ways. The total number of subgraphs in level $i$ is $\BigO(n/2^i)$ and each such subgraph generates a cut with $\BigO(2^{i\delta})$ edges of $M_\opt$. 
Thus, the total contribution of layer $i$ to $w$ can be bounded as $\BigO(n/(2^i)^{1-\delta})$. Consider another way to bound the contribution of layer $i$: The total number of subgraphs with more than $mr/n$ edges or more than $r$ vertices in any layer is $\BigO(n/r)$. Therefore, we can also bound the contribution of layer $i$ to $w$ as $\BigO(n2^{i\delta}/r)$. Therefore, the total contribution of layer $i$ is $\BigO(\min\{n/(2^i)^{1-\delta},n2^{i\delta}/r\})$. This function maximizes when $i = \Theta(\log r)$ and contributes $\BigO(n/r^{1-\delta})$ to $w$. Choosing a larger or smaller value of $i$ will only lead to a geometrically decreasing function. So, we can bound $w$ by adding the contribution of all levels which gives us a geometric series that converges to $w=\BigO(n/r^{1-\delta})$. For any $\delta \geq 1/2$,\footnote{Similar balancing can be done for $\delta < 1/2$.} we set $r = n^{1/(1+\delta)}$. This gives an algorithm that computes a maximum matching in $\BigO(mn^{\delta/(1+\delta)}\log{n})$ time. 

A $\delta$-good balanced cut can be easily derived from any vertex separator $Z$ of size $n^\delta$ whose removal disconnects the graph into two pieces of roughly equal size. This follows from the fact that $M_\opt$ can contain at most one edge incident on each vertex of $Z$. Therefore, by applying known algorithms for computing vertex separators, we can obtain $\delta$-good balanced cuts for the following classes of graphs.
\begin{itemize}
    \item For planar graphs, it is well known that a balanced vertex separator of size $\BigO(\sqrt{n})$ can be computed in $\BigO(n)$ time; see, for example, \cite{lt}. Applying this result yields an $\BigO(n^{4/3}\log{n})$ algorithm for computing maximum matching on planar graphs.
    \item For any $K_h$-minor free graph, any constant $\eps > 0$, and any parameter $\ell = \Omega(\sqrt{n} / (h\sqrt{\log{n}}))$, there is an algorithm that computes a balanced vertex separator of size $\BigO(\ell h^2 \log{n})$ in  $\BigO(h^{2}n^{3/2+\eps}/\ell^{1/2} + h^4 n \textrm{poly}(\log{n}))$ time; see Theorem 3 of \cite{wulff2011separator}. For any sufficiently small integer $h$ polynomial in $n$, we get a running time of $o(m\sqrt{n})$.
    \item It is well-known that a $d$-dimensional grid graph has an easily computable balanced vertex separator of  size $\BigO(n^{1-1/d})$. For any such graph, we can appropriately choose $\ell$ to obtain an algorithm that computes maximum matching in $\BigO(n^{\frac{d-1}{2d-1}}\log{n})$ time. 
\end{itemize}

\section{Efficient Implementation of Bottleneck Matching}
\label{appendix:efficient}

We begin by describing an efficient implementation of the preprocessing step of our algorithm.

\subparagraph*{Preprocessing Step:} At the start, $M= \emptyset$ and all dual weights are $0$. Consider any cell $\Box$ of grid $\grid'$ and any cell $\xi$ of $\grid$ that is contained inside $\Box$, suppose we have a point $a_{\xi}^1$, we assign a demand $d_{a_\xi^1}=|A_{\xi}^1|=|A_{\xi}|$ to $a_{\xi}^1$. Similarly, suppose we have a point $b_{\xi}^1$, we assign a supply $ s_{b_\xi^1}=|B_{\xi}^1|=|B_{\xi}|$. The preprocessing step reduces to finding a maximum matching of supplies to demand. This is an instance of the unweighted transportation problem which can be solved using the algorithm of~\cite{sidford} in $\BigOT(|\mathcal{E}_{\Box}|\sqrt{|\mathcal{A}_{\Box}\cup \mathcal{B}_{\Box}|}) = \BigOT(|\mathcal{E}_{\Box}|\sqrt{r})$.  Every edge of $\mathcal{E}$ participates in at most one piece. Therefore, the total time taken for preprocessing across all pieces is $\BigOT(|\mathcal{E}|\sqrt{r})=\BigOT(n\sqrt{r}/\eps^2)$. We can trivially convert the matching of supplies to demand to a matching in $\mathcal{G}$.

The matching $M$ along with the dual weights for every vertex $v \in A\cup B$, $y(v)\leftarrow 0$ will be $1$-feasible. 

The rest of our algorithm is an efficient implementation of the second step of the algorithm in Section~\ref{sec:graphmatch}. 

\subparagraph*{First Stage:} In the first stage of our algorithm, we create an augmented residual network $\mathcal{CG}'$ by adding a vertex $s$ to the compact residual network and add an edge from $s$ to every free vertex of $\mathcal{B}$. The weight of all edges from $s$ are set to $0$. Every other edge has a weight equal to its slack. The algorithm will execute Dijkstra's algorithm from $s$ in $\mathcal{CG}'$. Let $\ell_v$ be the shortest path distance from $s$ to $v$. If $\ell_v > \ell$, we do not update the dual weight. Otherwise, we update $y(v) \leftarrow y(v)-\ell+\ell_v$ if $v \in \mathcal{A}$. We update $y(v) \leftarrow y(v)+\ell-\ell_v$ if $v \in \mathcal{B}$. This completes the description of the first stage.

\subparagraph*{Second Stage:} We construct an admissible graph that consists of all the edges of the compact residual network that have a slack of $0$. Similar to the algorithm in Section~\ref{sec:graphmatch}, we will initiate a DFS from all free vertices of $\mathcal{B}$. However, there are two differences in our implementation of DFS. 
\begin{itemize}
    \item Recollect that each free vertex $v \in \mathcal{B}$ may represent a cluster that has $t>0$ free vertices. We will execute DFS from $v$ exactly $t$ times, once for each of the free vertex of $\mathcal{B}$.
    \item During the execution of any DFS, only edges visited by the DFS and not on the current path maintained by the DFS are marked as visited. Due to this modification, unlike the algorithm from Section~\ref{sec:graphmatch}, after the discovery of an augmenting path $P$, the algorithm will not discard edges of $P$ that have a weight of $1$, which from Lemma~\ref{lem:numberofphases}, can be at most $\BigO(w\log w)$. 
\end{itemize}

When we find an augmenting path in the compressed residual network, we will simply project this path to an augmenting path in $\mathcal{G}$ and then augment the matching in $\mathcal{G}$ as described in Section~\ref{sec:graphmatch}. After augmentation along a path $P$, vertices of $P$ may change the clusters. We update their clusters appropriately.

Our algorithm executes the first and the second stage of any phase on $\mathcal{G}$. However, it does this efficiently by the use of the compact residual network $\mathcal{CG}$.  So, invariants (I1) and (I2) continue to hold. 

\subparagraph*{Correctness:} From Invariant (I2), each phase of our algorithm will increase the cardinality of $M$ by at least $1$ and so,  our algorithm terminates with a maximum matching.

\subparagraph*{Efficiency:} 
We use the following notations to bound the efficiency of our algorithm.
Let $\{\mathcal{P}_1,\ldots,\mathcal{P}_t\}$ be the $t$ augmenting paths computed in the compact residual graph $\mathcal{CG}$ in the second step of the algorithm and let $\{P_1,\ldots,P_t\}$ be their projection on $\mathcal{G}$. Let $\{\K_1,\ldots, \K_t\}$ be the subsets of pieces of the compact graph $\mathcal{CG}$ such that $\K_i$ is the set of affected pieces with respect to the augmenting path $\mathcal{P}_i$. Let $M_0$ be the matching at the end of the first step of the algorithm. Let, for $1\le i\le t$,  $M_i = M_{i-1}\oplus P_i$, i.e., $M_i$ is the matching after the $i$th augmentation in the second step of the algorithm.  

The first stage is an execution of Dijkstra's algorithm which takes $\BigO(m+n\log n)=\BigO(n\log n/\eps^2)$ time. Suppose there are $\lambda$ phases; then the cumulative time taken across all phases for the first stage is $\BigOT(\lambda n/\eps^2)$. In the second stage of the algorithm,  in each phase, every edge is discarded once it is visited by a DFS, provided it is not in an affected piece and it is not an edge of weight $1$ on an augmenting path. Since each affected piece has $\BigO(r/\eps^2)$ edges and since there are $\BigO(w \log w)$ edges of weight $1$ on the computed augmenting paths, the total time taken by all the DFS searches across all the $\lambda$ phases is bounded by $\BigOT(n\lambda/\eps^2 + r/\eps^2\sum_{i=1}^t |\K_i| +w\log w)$. In Lemma~\ref{lem:numberofphases}, we bound $\lambda$ by $\sqrt{w}$ and  $\sum_{i=1}^t |\K_i|$ by $\BigO(w\log w)$. Therefore, the total time taken by the algorithm including the time taken by preprocessing step is $\BigOT(n/\eps^2(\sqrt{r}+\sqrt{w}+\frac{rw}{n}))$. Setting $r=n^{2/3}$, we get $w=\BigO(n/(\eps\sqrt{r})) = \BigO(n^{2/3}/\eps)$ and the total running time of our algorithm is $\BigOT(n^{4/3}/\eps^3)$.
 }


\begin{thebibliography}{10}

\bibitem{av_scg04}
P.~K. Agarwal and K.~R. Varadarajan.
\newblock A near-linear constant-factor approximation for euclidean bipartite
  matching?
\newblock In {\em Proc. 12th Annual Sympos. Comput. Geom.}, pages 247--252,
  2004.

\bibitem{argawal_transportation_17}
Pankaj~K. Agarwal, Kyle Fox, Debmalya Panigrahi, Kasturi~R. Varadarajan, and
  Allen Xiao.
\newblock Faster algorithms for the geometric transportation problem.
\newblock In {\em 33rd International Symposium on Computational Geometry, SoCG
  2017, July 4-7, 2017, Brisbane, Australia}, pages 7:1--7:16, 2017.

\bibitem{sa_stoc_14}
Pankaj~K. Agarwal and R.~Sharathkumar.
\newblock Approximation algorithms for bipartite matching with metric and
  geometric costs.
\newblock In {\em Symposium on Theory of Computing, {STOC} 2014, New York, NY,
  USA, May 31 - June 03, 2014}, pages 555--564, 2014.

\bibitem{soda-18}
Mudabir~Kabir Asuthulla, Sanjeev Khanna, Nathaniel Lahn, and Sharath
  Raghvendra.
\newblock A faster algorithm for minimum-cost bipartite perfect matching in
  planar graphs.
\newblock In {\em SODA}, pages 457--476, 2018.

\bibitem{multiple_planar_maxflow}
Glencora Borradaile, Philip~N. Klein, Shay Mozes, Yahav Nussbaum, and Christian
  Wulff-Nilsen.
\newblock Multiple-source multiple-sink maximum flow in directed planar graphs
  in near-linear time.
\newblock In {\em FOCS}, pages 170--179, 2011.

\bibitem{efrat_algo}
A.~Efrat, A.~Itai, and M.~J. Katz.
\newblock Geometry helps in bottleneck matching and related problems.
\newblock {\em Algorithmica}, 31(1):1--28, Sep 2001.
\newblock URL: \url{https://doi.org/10.1007/s00453-001-0016-8}, \href
  {http://dx.doi.org/10.1007/s00453-001-0016-8}
  {\path{doi:10.1007/s00453-001-0016-8}}.

\bibitem{ford_fulkerson}
L.~R. Ford~Jr and D.~R. Fulkerson.
\newblock A simple algorithm for finding maximal network flows and an
  application to the hitchcock problem.
\newblock {\em No. RAND/P-743}, 1955.

\bibitem{gt_sjc89}
H.~N. Gabow and R.E. Tarjan.
\newblock Faster scaling algorithms for network problems.
\newblock {\em SIAM J. Comput.}, 18:1013--1036, October 1989.
\newblock URL: \url{http://portal.acm.org/citation.cfm?id=75795.75806}, \href
  {http://dx.doi.org/10.1137/0218069} {\path{doi:10.1137/0218069}}.

\bibitem{gabow2017weighted}
Harold~N Gabow.
\newblock The weighted matching approach to maximum cardinality matching.
\newblock {\em Fundamenta Informaticae}, 154(1-4):109--130, 2017.

\bibitem{hk_sicomp73}
J.~Hopcroft and R.~Karp.
\newblock An $n^{5/2} $ algorithm for maximum matchings in bipartite graphs.
\newblock {\em SIAM Journal on Computing}, 2(4):225--231, 1973.

\bibitem{hungarian_56}
Harold Kuhn.
\newblock Variants of the hungarian method for assignment problems.
\newblock {\em Naval Research Logistics}, 3(4):253--258, 1956.

\bibitem{lr_soda19}
Nathaniel Lahn and Sharath Raghvendra.
\newblock A faster algorithm for minimum-cost bipartite matching in minor free
  graphs.
\newblock In {\em SODA}, pages 569--588, 2019.

\bibitem{sidford}
Yin~Tat Lee and Aaron Sidford.
\newblock Following the path of least resistance : An {\~{o}}(m sqrt(n))
  algorithm for the minimum cost flow problem.
\newblock {\em CoRR}, abs/1312.6713, 2013.
\newblock URL: \url{http://arxiv.org/abs/1312.6713}, \href
  {http://arxiv.org/abs/1312.6713} {\path{arXiv:1312.6713}}.

\bibitem{madry2013navigating}
Aleksander Madry.
\newblock Navigating central path with electrical flows: From flows to
  matchings, and back.
\newblock In {\em Foundations of Computer Science (FOCS)}, pages 253--262.
  IEEE, 2013.

\bibitem{fast_matrix_matching}
Marcin Mucha and Piotr Sankowski.
\newblock Maximum matchings via gaussian elimination.
\newblock In {\em FOCS}, pages 248--255, 2004.

\bibitem{argawal_phillips_rms}
Jeff~M. Phillips and Pankaj~K. Agarwal.
\newblock On bipartite matching under the {RMS} distance.
\newblock In {\em Proceedings of the 18th Annual Canadian Conference on
  Computational Geometry, {CCCG} 2006, August 14-16, 2006, Queen's University,
  Ontario, Canada}, 2006.

\bibitem{s_socg13}
R.~Sharathkumar.
\newblock A sub-quadratic algorithm for bipartite matching of planar points
  with bounded integer coordinates.
\newblock In {\em SOCG}, pages 9--16, 2013.

\bibitem{sa_soda12}
R.~Sharathkumar and P.~K. Agarwal.
\newblock Algorithms for transportation problem in geometric settings.
\newblock In {\em Proc. 23rd Annual ACM/SIAM Sympos. on Discrete Algorithms},
  pages 306--317, 2012.

\bibitem{sa_stoc12}
R.~Sharathkumar and P.~K. Agarwal.
\newblock A near-linear time approximation algorithm for geometric bipartite
  matching.
\newblock In {\em Proc. 44th Annual ACM Annual Sympos. on Theory of Comput.},
  pages 385--394, 2012.

\end{thebibliography}
\end{document}